\newtheorem{theorem}{Theorem}[section]
\newtheorem{proposition}[theorem]{Proposition}
\newtheorem{lemma}[theorem]{Lemma}
\theoremstyle{definition}
\newtheorem{definition}[theorem]{Definition}
\theoremstyle{remark}
\newcommand{\tp }{{\scriptscriptstyle\mathsf{T}}}
\def\1{\bm{1}}
\def\ra{{\textnormal{a}}}
\def\rq{{\textnormal{q}}}
\def\ry{{\textnormal{y}}}
\def\rva{{\mathbf{a}}}
\def\rmA{{\mathbf{A}}}
\def\rmB{{\mathbf{B}}}
\def\rmD{{\mathbf{D}}}
\def\rmG{{\mathbf{G}}}
\def\rmK{{\mathbf{K}}}
\def\rmY{{\mathbf{Y}}}
\def\rmPhi{{\mathbf{\Phi}}}
\def\vf{{\bm{f}}}
\def\vg{{\bm{g}}}
\def\vh{{\bm{h}}}
\def\vq{{\bm{q}}}
\def\vu{{\bm{u}}}
\def\vv{{\bm{v}}}
\def\vw{{\bm{w}}}
\def\vx{{\bm{x}}}
\def\vy{{\bm{y}}}
\def\vz{{\bm{z}}}
\def\mH{{\bm{H}}}
\def\mI{{\bm{I}}}
\def\mK{{\bm{K}}}
\def\mP{{\bm{P}}}
\def\mV{{\bm{V}}}
\DeclareMathAlphabet{\mathsfit}{\encodingdefault}{\sfdefault}{m}{sl}
\SetMathAlphabet{\mathsfit}{bold}{\encodingdefault}{\sfdefault}{bx}{n}
\newcommand{\E}{\mathbb{E}}
\newcommand{\R}{\mathbb{R}}
\DeclareMathOperator*{\argmin}{arg\,min}
\DeclareMathOperator{\sign}{sign}
\title{Faster Binary Embeddings for Preserving \\Euclidean Distances}
\author{Jinjie Zhang \& Rayan Saab \thanks{The Python source code of our paper: \href{https://github.com/jayzhang0727/Faster-Binary-Embeddings-for-Preserving-Euclidean-Distances.git}{https://github.com/jayzhang0727/Faster-Binary-Embeddings-for-Preserving-Euclidean-Distances.git}} \\
Department of Mathematics, Hal{\i}c{\i}o\u{g}lu Data Science Institute \\
University of California San Diego \\
\texttt{\{jiz003, rsaab\}@ucsd.edu} 
}
\begin{document}

\maketitle
\begin{abstract}
	We propose a fast, distance-preserving, binary embedding algorithm to transform a high-dimensional dataset $\mathcal{T}\subseteq\mathbb{R}^n$ into binary sequences in the cube $\{\pm 1\}^m$. When $\mathcal{T}$ consists of well-spread (i.e., non-sparse) vectors, our embedding method applies a stable noise-shaping quantization scheme to $\rmA \vx$ where $\rmA\in\mathbb{R}^{m\times n}$ is a sparse Gaussian random matrix. This contrasts with most binary embedding methods, which usually use $\vx\mapsto \sign(\rmA\vx)$ for the embedding. Moreover, we show that Euclidean distances among the elements of $\mathcal{T}$ are approximated by the $\ell_1$ norm on the images of $\{\pm 1\}^m$ under a fast linear transformation. This again contrasts with standard methods, where the Hamming distance is used instead.  Our method is both fast and memory efficient, with time complexity  $O(m)$ and space complexity $O(m)$ on well-spread data. When the data is not well-spread, we show that the approach still works provided that data is transformed via a Walsh-Hadamard matrix, but now the cost is $O(n\log n)$ per data point.  Further, we prove that the method is accurate and its associated error is comparable to that of a continuous valued Johnson-Lindenstrauss embedding plus a quantization error that admits a polynomial decay as the embedding dimension $m$ increases.
	Thus the length of the binary codes required to achieve a desired accuracy is quite small, and we show it can even be compressed further without compromising the accuracy. To illustrate our results, we test the proposed method on natural images and show that it achieves strong performance.
\end{abstract}

\section{Introduction}\label{intro}
Analyzing large data sets of high-dimensional raw data is usually computationally demanding and memory intensive. As a result, it is often necessary as a preprocessing step to transform data into a lower-dimensional space while approximately preserving important geometric properties, such as pairwise $\ell_2$ distances. As a critical result in dimensionality reduction, the Johnson-Lindenstrauss (JL) lemma \citep{JL} guarantees that every finite set $\mathcal{T}\subseteq\mathbb{R}^n$ can be (linearly) mapped to a $m=O(\epsilon^{-2}
 \log(|\mathcal{T}|))$ dimensional space in such a way that all pairwise distances are preserved up to an $\epsilon$-Lipschitz distortion. Additionally, there are many significant results to speed up the JL transform by introducing fast embeddings, e.g. \citep{Ailon, Ailon2, Felix, Nelson}, or by using sparse matrices \citep{Kane, Kane2,clarkson}. Such fast embeddings can usually be computed in $O(n\log n)$ versus the $O(mn)$ time complexity of JL transforms that rely on unstructured dense matrices.

\subsection{Related Work} 
To further reduce  memory requirements, progress has been made in \emph{nonlinearly} embedding high-dimensional sets  $\mathcal{T}\subseteq\mathbb{R}^n$ to the binary cube $\{-1,1\}^m$ with $m\ll n$, a process known as binary embedding.  Provided that $d_1(\cdot,\cdot)$ is a metric on $\mathbb{R}^n$,  a distace preserving binary embedding is a  map $f:\mathcal{T}\rightarrow \{-1,1\}^m$ and a function $d_2(\cdot,\cdot)$ on $\{-1,1\}^m\times\{-1,1\}^m$ to approximate distances, i.e.,
\begin{equation}\label{binary-embedding}
|d_2(f(\vx),f(\vy))-d_1(\vx,\vy)|\leq \alpha, \quad \text{for}\ \forall \vx,\vy\in\mathcal{T}.
\end{equation}
The potential dimensionality reduction $(m\ll n)$ and $1$-bit representation per dimension imply that storage space can be considerably reduced and downstream applications like  learning and retrieval can happen directly using bitwise operations.  Most existing nonlinear mappings $f$ in \eqref{binary-embedding} are generated  using simple memory-less scalar quantization (MSQ). 
 For example, given a set of unit vectors $\mathcal{T}\subseteq \mathbb{S}^{n-1}$ with finite size $|\mathcal{T}|$, consider the map 
\begin{equation}\label{sign}
\vq_\vx := f(\vx) = \sign(\rmG\vx)
\end{equation}
where $\rmG\in\mathbb{R}^{m\times n}$ is a standard Gaussian random matrix and $\sign(\cdot)$ returns the element-wise sign of its argument. Let $d_1(\vx,\vy)=\frac{1}{\pi}\arccos(\|\vx\|_2^{-1}\|\vy\|_2^{-1}\langle \vx,\vy\rangle)$ be the normalized angular distance and $d_2(\vq_\vx,\vq_\vy)=\frac{1}{2m}\|\vq_\vx-\vq_\vy\|_1$ be the normalized Hamming distance. Then,  \citet{Yi} show that \eqref{binary-embedding} holds with probability at least $1-\eta$ if  $m\gtrsim \alpha^{-2}\log(|\mathcal{T}|/\eta)$, so one can approximate geodesic distances with  normalized Hamming distances. While this approach achieves optimal bit complexity (up to constants) \citep{Yi}, it has been observed in practice that  $m$ is usually around $O(n)$ to guarantee reasonable accuracy \citep{Gong3,Sanchez,Yu}. Much like linear JL embedding techniques admit fast counterparts, fast binary embedding algorithms have been developed to significantly reduce the runtime of binary embeddings \citep{Gong1,Liu,Gong2,Gong3,Li,Raginsky}. 
Indeed, fast JL transforms (FJLT) and Gaussian Toeplitz matrices \citep{Yi}, structured hashed projections \citep{Choromanska}, iterative quantization \citep{Gong1}, bilinear projection \citep{Gong3}, circulant binary embedding \citep{Yu,Dirksen,Dirksen2, Oymak, Kim}, sparse projection \citep{Xia}, and fast orthogonal projection \citep{Zhang} have all been considered. 

These methods can decrease time complexity to $O(n \log n)$ operations per embedding, but still suffer from some important drawbacks. Notably, due to the sign function, these algorithms completely discard all magnitude information, as $\sign(\rmA\vx)=\sign(\rmA(\alpha \vx))$ for all $\alpha>0$. So, all points in the same direction embed to the same binary vector and cannot be distinguished.  Even if one settles for recovering geodesic distances, using the sign function in \eqref{sign} is an instance of MSQ so the estimation error $\alpha$ in \eqref{binary-embedding} decays slowly as the number of bits $m$ increases \citep{Yi}. 

In addition to the above data independent approaches, there are data dependent embedding methods for distance recovery, including product quantization \citep{product1,product2}, LSH-based methods \citep{LSH1, LSH2, LSH3} and iterative quantization \citep{ITQ}.  Their accuracy, which can be excellent, nevertheless depends on the underlying distribution of the input dataset. Moreover, they may be associated with larger time and space complexity for embedding the data. For example, product quantization performs $k$-means clustering in each subspace to find potential centroids and stores associated lookup tables. LSH-based methods need random shifts and dense random projections to quantize each input data point.

Recently \citet{Saab} resolved these issues by replacing the simple sign function with a Sigma-Delta ($\Sigma\Delta$) quantization scheme, or alternatively other noise-shaping schemes (see \citep{chou2016distributed}) whose properties will be discussed in Section~\ref{quantization}. They use the binary embedding
\begin{equation}\label{previous-algorithm}
	\vq_\vx := Q(\rmD\rmB\vx)
\end{equation}
where $Q$ is now a stable $\Sigma\Delta$ quantization scheme, $\rmD\in\mathbb{R}^{m\times m}$ is a diagonal matrix with random signs, and  $\rmB\in\mathbb{R}^{m\times n}$ are specific structured random matrices.
To give an example of $\Sigma\Delta$ quantization in this context, consider $\vw:=\rmD\rmB\vx$. Then the simplest $\Sigma\Delta$ scheme computes $\vq_\vx$ via the following iteration, run for $i=1,...,m$:
\begin{equation}\label{SD_1st}
\begin{cases}
u_0=0,\\
\vq_\vx(i)=\sign(w_i + u_{i-1}),\\
u_i = u_{i-1} + w_i - q_i.
\end{cases}
\end{equation}

The choices of $\rmB$ in \citep{Saab} allow matrix vector multiplication to be implemented using the fast Fourier transform. Then the original Euclidean distance $\|\vx-\vy\|_2$ can be recovered via a pseudo-metric on the quantized vectors given by 
\begin{equation}\label{pseudo-metirc}
	d_{\widetilde{\mV}}(\vq_\vx,\vq_\vy):=\|\widetilde{\mV}(\vq_\vx-\vq_\vy)\|_2
\end{equation}
 where $\widetilde{\mV}\in \R^{p\times m}$ is a ``normalized condensation operator", a sparse matrix that can be applied fast (see Section~\ref{quantization}). Regarding the complexity of applying \eqref{previous-algorithm} to a single  $\vx\in \R^n$, note that $\vx\mapsto \rmD\rmB\vx$ has time complexity $O(n\log n)$ while the quantization map needs $O(m)$ time and results in an $m$ bit representation. So when $m \leq n$, the total time complexity for  \eqref{previous-algorithm} is around $O(n\log n)$.

\subsection{Methods and Contributions}
We extend these results by replacing  $\rmD\rmB$ in \eqref{previous-algorithm} by a sparse Gaussian matrix $\rmA\in\mathbb{R}^{m\times n}$ so that now
\begin{equation}\label{our-algorithm}
	\vq_\vx:=Q(\rmA\vx).
\end{equation}
Given scaled high-dimensional data $\mathcal{T}\subset \R^n$ contained in the $\ell_2$ ball $B_2^n(\kappa)$ with radius $\kappa$,  we put forward Algorithm~\ref{algorithm1} to generate binary sequences and Algorithm~\ref{algorithm2} to compute estimates of the Euclidean distances between elements of $\mathcal{T}$ via an $\ell_1$-norm rather than $\ell_2$-norm. The contribution of this work is threefold. First, we prove  Theorem~\ref{thm:main_intro} quantifying the performance of our algorithms. 
 
\begin{algorithm}[H]\label{algorithm1}
\DontPrintSemicolon
  \KwInput{$\mathcal{T}=\{\vx^{(j)}\}_{j=1}^k \subseteq B_2^n(\kappa)$\hfill $\triangleright$ Data points in $\ell_2$ ball} 
  Generate $\rmA\in\mathbb{R}^{m\times n}$ as in Definition \ref{sparse-gaussian}
  \hfill $\triangleright$ Sparse Gaussian matrix $\rmA$\\
  \For{$j\leftarrow 1$ \KwTo $k$}
  {
  $\vz^{(j)}\leftarrow \rmA \vx^{(j)}$  \\
$\vq^{(j)}=Q(\vz^{(j)})$ \hfill$\triangleright$ Stable $\Sigma\Delta$ quantizer $Q$ as in \eqref{SD_1st}, or more generally \eqref{general-quantizer-new}.} 
   
\caption{Fast Binary Embedding for Finite $\mathcal{T}$}
\KwOutput{Binary sequences $\mathcal{B}=\{\vq^{(j)}\}_{j=1}^k\subseteq\{-1,1\}^m$}
\end{algorithm}

\smallskip
\begin{algorithm}[H]\label{algorithm2}
\DontPrintSemicolon
  \KwInput{$\vq^{(i)}, \vq^{(j)}\in\mathcal{B} $ \hfill $\triangleright$ Binary sequences produced by Algorithm \ref{algorithm1}}
$\vy^{(i)}\leftarrow \widetilde{\mV}\vq^{(i)}$   \hfill $\triangleright$ Condense the components of $\vq$ \\
$\vy^{(j)}\leftarrow \widetilde{\mV}\vq^{(j)}$\\   
\KwOutput{ $\|\vy^{(i)}-\vy^{(j)}\|_1$ \hfill $\triangleright$ Approximation of $\|\vx^{(i)}-\vx^{(j)}\|_2$}
\caption{$\ell_2$ Norm Distance Recovery}
\end{algorithm}

\begin{theorem}[Main result]\label{thm:main_intro}
Let $\mathcal{T}\subseteq \R^n$ be a finite, appropriately scaled set with elements satisfying $\|\vx\|_\infty=O(n^{-1/2}\|\vx\|_2)$ and $\|x\|_2 \leq \kappa <1$. If $m\gtrsim p:=\Omega(\epsilon^{-2}\log(|\mathcal{T}|^2/\delta))$ and $r\geq 1$ is the integer order of $Q$, then with probability $1-2\delta$ on the draw of the sparse Gaussian matrix $\rmA$, the following holds uniformly over all $\vx,\vy$ in $\mathcal{T}$: Embedding $\vx,\vy$ into $\{-1,1\}^m$ using Algorithm~\ref{algorithm1}, and estimating the associated distance between them using Algorithm~\ref{algorithm2} yields the error bound
\[
\Bigl|d_{\widetilde{\mV}}(\vq_\vx,\vq_y)-\|\vx-\vy\|_2\Bigr|
\leq  c\left(\frac{m}{p}\right)^{-r+1/2}+\epsilon\|\vx-\vy\|_2
\]
where $c>0$ is a constant.
\end{theorem}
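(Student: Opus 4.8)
The plan is to split the nonlinear binary-embedding error into two pieces — a linear Johnson--Lindenstrauss-type distortion and a quantization residual — and bound each separately. Throughout, the recovered distance is the $\ell_1$ norm of the condensed codes produced by Algorithm~\ref{algorithm2}, i.e.\ $d_{\widetilde{\mV}}(\vq_\vx,\vq_\vy)=\|\widetilde{\mV}(\vq_\vx-\vq_\vy)\|_1$. The starting point is the order-$r$ stable noise-shaping relation for $Q$, namely $\rmA\vx-\vq_\vx=D^r\vu_\vx$, where $D$ is the $m\times m$ first-order difference matrix and $\vu_\vx$ is the internal state vector with $\|\vu_\vx\|_\infty\leq C$ for a constant $C$, valid as long as $\rmA\vx$ lies in the no-overflow range of $Q$. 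The first step is therefore to verify this input condition: since $\rmA$ is sparse Gaussian, each coordinate $(\rmA\vx)_i$ is, conditioned on its support pattern, a centred Gaussian whose variance concentrates around a multiple of $\|\vx\|_2^2$. The well-spread hypothesis $\|\vx\|_\infty=O(n^{-1/2}\|\vx\|_2)$ prevents any single coordinate from dominating this variance, and together with $\|\vx\|_2\leq\kappa<1$ keeps $\|\rmA\vx\|_\infty$ below the overflow threshold with high probability, so stability holds uniformly over $\mathcal{T}$ after a union bound.

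Applying $\widetilde{\mV}$ and subtracting yields the exact decomposition
\[
\widetilde{\mV}(\vq_\vx-\vq_\vy)=\widetilde{\mV}\rmA(\vx-\vy)-\widetilde{\mV}D^r(\vu_\vx-\vu_\vy),
\]
so by the triangle inequality in $\ell_1$ it suffices to control (i) the quantization term $\|\widetilde{\mV}D^r(\vu_\vx-\vu_\vy)\|_1$ and (ii) the discrepancy between $\|\widetilde{\mV}\rmA(\vx-\vy)\|_1$ and $\|\vx-\vy\|_2$. For (i) I would invoke the defining property of the order-$r$ condensation operator: $\widetilde{\mV}$ is a block-averaging (low-pass) operator acting on $p$ blocks of length $\lambda=m/p$ whose weights annihilate the high-frequency content of $D^r$, giving $\|\widetilde{\mV}D^r\|_{\ell_\infty\to\ell_1}\lesssim\lambda^{-r+1/2}=(m/p)^{-r+1/2}$. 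Combined with $\|\vu_\vx-\vu_\vy\|_\infty\leq 2C$ from stability, this bounds (i) by $c(m/p)^{-r+1/2}$, exactly the first term of the claim.

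The heart of the argument is (ii), the linear embedding estimate. The normalization of $\widetilde{\mV}$ is chosen so that the rows of $\widetilde{\mV}\rmA$ behave like independent centred Gaussian vectors whose per-coordinate standard deviation is proportional to $\|\vx-\vy\|_2$, with the constant fixed so that $\E\|\widetilde{\mV}\rmA(\vx-\vy)\|_1=\|\vx-\vy\|_2$, using $\E|g|=\sqrt{2/\pi}\,\sigma$ for $g\sim\mathcal{N}(0,\sigma^2)$. Each coordinate of $\widetilde{\mV}\rmA(\vx-\vy)$ is sub-exponential, so Bernstein's inequality applied to the sum of $p$ approximately independent terms gives $\bigl|\,\|\widetilde{\mV}\rmA(\vx-\vy)\|_1-\|\vx-\vy\|_2\,\bigr|\leq\epsilon\|\vx-\vy\|_2$ with failure probability $\exp(-c\epsilon^2 p)$; a union bound over the at most $|\mathcal{T}|^2$ pairs forces $p=\Omega(\epsilon^{-2}\log(|\mathcal{T}|^2/\delta))$, matching the hypothesis. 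Adding the bounds from (i) and (ii) and the stability failure probability yields the stated estimate with total probability $1-2\delta$.

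The main obstacle will be making the approximate Gaussianity of the rows of $\widetilde{\mV}\rmA$ rigorous: because $\rmA$ is sparse rather than fully Gaussian, the coordinates of $\rmA(\vx-\vy)$ are only conditionally Gaussian and their variances fluctuate with the random support. Controlling this requires a Berry--Esseen / Lindeberg-type estimate in which the well-spread assumption supplies the crucial bound on the third-moment-to-variance ratio, ensuring that the deviation from the exact Gaussian $\ell_1$ computation is absorbed into the $\epsilon$ distortion rather than degrading the quantization rate $(m/p)^{-r+1/2}$.
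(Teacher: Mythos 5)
Your proposal is correct and follows the same skeleton as the paper's proof: a sup-norm concentration bound guaranteeing stability of the quantizer uniformly over $\mathcal{T}$ (Lemma~\ref{concentration}), the exact identity $\widetilde{\mV}(\vq_\vx-\vq_\vy)=\widetilde{\mV}\rmA(\vx-\vy)-\widetilde{\mV}\mP^r(\vu_\vx-\vu_\vy)$ followed by the triangle inequality, and the operator-norm estimate $\|\widetilde{\mV}\mP^r\|_{\infty,1}\lesssim \lambda^{-r+1/2}$ (Lemma~\ref{inf-1 norm}) paired with $\|\vu\|_\infty\le C$ from stability. Where you genuinely diverge is in the linear (JL) term. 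The paper does not run a Bernstein/Lindeberg argument on the coordinates of $\widetilde{\mV}\rmA(\vx-\vy)$; instead it uses the algebraic identity $\mV\rmA\vy=\rmB(\vv^\tp\otimes\vy)$, where $\rmB$ is a row-reshaped copy of $\rmA$ (hence again a sparse Gaussian matrix) and the tensor $\vv^\tp\otimes\vy$ inherits well-spreadness because $\|\vv\|_\infty/\|\vv\|_2=O(\lambda^{-1/2})$. This reduces the entire linear estimate to a black-box invocation of Matou\v{s}ek's sparse $\ell_2\to\ell_1$ embedding theorem (Lemma~\ref{SJLT}), and it is precisely this reduction that dictates the sparsity choice $s=\Theta(\epsilon^{-1}n^{-1}\lambda^{-1})$. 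Your route amounts to re-proving that theorem inline: what you call the ``main obstacle'' --- the bias $\E\|\widetilde{\mV}\rmA(\vx-\vy)\|_1<\|\vx-\vy\|_2$ coming from the Jensen gap of the fluctuating conditional variance on the random support --- is exactly the content of Matou\v{s}ek's result, and your plan (conditional Gaussianity, variance-fluctuation control via well-spreadness and the choice of $s$, then Bernstein across the $p$ blocks) is a viable way to carry it out. The paper's route buys a clean offloading of this delicate bias analysis, with the correct $s$ falling out mechanically; your route buys a self-contained proof at the cost of reproducing it. Three small corrections: the $p$ condensed coordinates are \emph{exactly} independent (they involve disjoint blocks of rows of $\rmA$), not merely approximately; $\kappa<1$ alone does not keep $\|\rmA\vx\|_\infty$ below the overflow threshold, since the maximum of $m$ such coordinates grows like $\kappa\sqrt{\log m}$, so you need the scaling $\kappa\lesssim \mu/\sqrt{\beta+\log(2m)}$ that the paper (and the phrase ``appropriately scaled'') imposes; and the normalization does not make $\E\|\widetilde{\mV}\rmA(\vx-\vy)\|_1$ exactly equal to $\|\vx-\vy\|_2$, only $(1\pm O(\epsilon))$ times it, which your final paragraph in effect concedes.
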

Theorem \ref{thm:main_intro} yields an approximation error bounded by two components, one due to quantization  and another that resembles the error from a \emph{linear} JL embedding into a $p$-dimensional space. The latter part is essentially proportional to $p^{-1/2}$, while the quantization component decays polynomially fast in $m$, and can be made harmless by increasing $m$. Moreover, the number of bits $m\gtrsim \epsilon^{-2}\log(|\mathcal{T}|)$ achieves the optimal bit complexity required by any oblivious random embedding that preserves Euclidean or squared Euclidean distance, see Theorem 4.1 in \citep{dirksen2020binarized}. Theorem \ref{main-results} is a more precise version of Theorem~\ref{thm:main_intro}, with all quantifiers, and scaling parameters specified explicitly, and with a potential modification to $\rmA$ that enables the result to hold for arbitrary (not necessarily well-spread) finite $\mathcal{T}$, at the cost of increasing the computational complexity of embedding a point to $O(n\log n)$. We also note that if the data did not satisfy the scaling assumption of Theorems \ref{thm:main_intro} and \ref{main-results}, then one can replace $\{-1,1\}$ by $\{-C,C\}$, and the quantization error would scale by $C$.

Second, due to the sparsity of $\rmA$, \eqref{our-algorithm} can be computed much faster than \eqref{previous-algorithm}, when restricting our results to  ``well-spread" vectors $\vx$, i.e., those that are not sparse. On the other hand, in Section~\ref{complexity}, we  show that  Algorithm~\ref{algorithm1}  achieves $O(m)$ time and space complexity in contrast with the common $O(n\log n)$ runtime of fast binary embeddings, e.g., \citep{Gong3, Yi, Yu, Dirksen,Dirksen2, Saab} that rely on fast JL transforms or circulant matrices. Meanwhile, Algorithm~\ref{algorithm2} requires only $O(m)$ runtime.

Third, Definition \ref{condensation} shows that $\widetilde{\mV}$ is  sparse and essentially populated by integers bounded by $(m/p)^r$ where $r,m,p$ are as in Theorem \ref{thm:main_intro}. In Section~\ref{complexity}, we note that each $\vy^{(i)}=\widetilde{\mV}\vq^{(i)}$ (and the distance query), can be represented by $O(p \log_2(m/p))$ bits, instead of $m$ bits, without affecting the reconstruction accuracy. This is a consequence of using the $\ell_1$-norm in Algorithm \ref{algorithm2}. Had we instead used an $\ell_2$-norm, we would have required $O(p(\log_2(m/p))^2)$ bits.

Finally, we remark that while the assumption that the vectors $x$ are well-spread (i.e. $\|\vx\|_\infty=O(n^{-1/2}\|\vx\|_2)$) may appear restrictive, there are important instances where it holds. Natural images seem to be one such case, as are random Fourier features \citep{rahimi2007random}. Similarly, Gaussian (and other subgaussian) random vectors satisfy a slightly weakened $\|\vx\|_\infty=O(\log(n) n^{-1/2}\|\vx\|_2)$ assumption with high probability, and one can modify our construction by slightly reducing the sparsity of $\rmA$ (and slightly increasing the computational cost) to handle such vectors. On the other hand, if the data simply does not satisfy such an assumption, one can still apply Theorem \ref{main-results} part (ii), but now the complexity of embedding a point is $O(n\log n)$. 

\section{Preliminaries}
\subsection{Notation and definitions}
Throughout,  $f(n)=O(g(n))$ and $f(n)=\Omega(g(n))$ mean that $|f(n)|$ is bounded above and below respectively by a positive function $g(n)$ up to constants asymptotically; that is, 
\(
\limsup_{n\rightarrow \infty}\frac{|f(n)|}{g(n)}<\infty.
\)
Similarly, we use $f(n)=\Theta(g(n))$ to denote that $f(n)$ is bounded both above and below by a positive function $g(n)$ up to constants asymptotically. We next define operator norms. \begin{definition}
Let $\alpha, \beta\in[1,\infty]$ be integers. The $(\alpha,\beta)$ operator norm of  $\mK\in\mathbb{R}^{m\times n}$ is
	\(
	\|\mK\|_{\alpha,\beta} = \max_{x\neq 0}\frac{\|\mK\vx\|_\beta}{\|\vx\|_\alpha}.
	\) 
\end{definition}
\noindent We now introduce some notation and definitions that are relevant to our construction.
\begin{definition}[Sparse Gaussian random matrix]\label{sparse-gaussian}
	Let $\rmA=(\ra_{ij})\in\mathbb{R}^{m\times n}$ be a random matrix with i.i.d. entries such that $\ra_{ij}$ is $0$ with probability $1-s$ and is drawn from $\mathcal{N}(0, \frac{1}{s})$ with probability $s$.
\end{definition}
We adopt the definition of a condensation operator of \citet{chou2016distributed, Saab}.
\begin{definition}[Condensation operator]\label{condensation}
Let $p$, $r$, $\lambda$ be fixed positive integers such that $\lambda=r\widetilde{\lambda}-r+1$ for some integer $\widetilde{\lambda}$. Let $m=\lambda p$ and $\vv$ be a row vector in $\mathbb{R}^\lambda$ whose entry $v_j$ is the $j$-th coefficient of the polynomial $(1+z+\ldots+z^{\widetilde{\lambda}-1})^r$. Define the condensation operator $\mV\in\mathbb{R}^{p\times m}$ by
\[
\mV=\mI_p\otimes \vv = 
\begin{bmatrix}
\vv&  & \\
&  \ddots & \\
&  & \vv
\end{bmatrix}.
\]  
For example, when $r=1$, $\lambda=\widetilde{\lambda},$ and $\vv\in\R^{\lambda}$ is simply the vector of all ones.
The normalized condensation operator is given by
\[
\widetilde{\mV}=\frac{\sqrt{\pi/2}}{p\|\vv\|_2}\mV. 
\]
\end{definition}

The fast JL transform was first studied by \citet{Ailon}. It admits many variants and improvements, e.g. \citep{Felix, Matousek}. The idea is that given any  $\vx\in\mathbb{R}^n$ we use a fast ``Fourier-like" transform, like the Walsh-Hadamard transform, to distribute the total mass (i.e. $||\vx||_2$) of $\vx$ relatively evenly to its coordinates. 

\begin{definition}[FJLT]\label{FJLT}
The fast JL transform can be obtained by 
\begin{equation}\label{FJLT-def}
	\rmPhi:= \rmA\mH\rmD\in\mathbb{R}^{m\times n} .
\end{equation}
Here, $\rmA\in\mathbb{R}^{m\times n}$ is a sparse Gaussian random matrix, as in Definition \ref{sparse-gaussian}, 
	while $\mH\in\mathbb{R}^{n\times n}$ is a normalized Walsh-Hadamard matrix defined by
	\(
	H_{ij}=n^{-1/2}(-1)^{\langle i-1,j-1\rangle}
	\)
	where $\langle i,j\rangle$ is the bitwise dot product of the binary representations of the numbers $i$ and $j$. Finally, $\rmD\in\mathbb{R}^{n\times n}$ is  diagonal  with diagonal entries drawn independently from $\{-1,1\}$ with probability $1/2$ for each.
\end{definition}

\subsection{condensed Johnson-Lindenstrauss Transforms}
\begin{definition}When $\widetilde{\mV}$ is a condensation operator, and $\rmA$ is a sparse Gaussian, we refer to $\widetilde{\mV}\rmA$ as a condensed sparse JL transform (CSJLT). When $\rmA$ is replaced by $\rmPhi$ as in Definition \ref{FJLT} we refer to $\widetilde{\mV}\rmPhi$ as a condensed fast JL transform (CFJLT). \end{definition}

The definition above is justified by the following lemma (see Appendix~\ref{appendix-CJLT} for the proof). 

\begin{lemma}[CJLT lemma]\label{CJLT-Corollary2}
	 Let $\mathcal{T}$ be a finite subset of $\mathbb{R}^n$, $\lambda\in\mathbb{N}$, $\epsilon\in(0,\frac{1}{2})$, $\delta\in (0,1)$, $p=O(\epsilon^{-2}\log(|\mathcal{T}|^2/\delta))\in\mathbb{N}$ and $m=\lambda p$. Let $\widetilde{\mV}\in\mathbb{R}^{p\times m}$ be as in Definition~\ref{condensation}, $\rmA\in\mathbb{R}^{m\times n}$ be the sparse Gaussian matrix in Definition~\ref{sparse-gaussian} with $s=\Theta(\epsilon^{-1}n^{-1}(\|\vv\|_\infty/\|\vv\|_2)^2)\leq 1$, and $\rmPhi=\rmA\mH\rmD\in\mathbb{R}^{m\times n}$ be the FJLT in Definition \ref{FJLT} with $s=\Theta(\epsilon^{-1}n^{-1}(\|\vv\|_\infty/\|\vv\|_2)^2\log n)\leq 1$. If $\mathcal{T}$ consists of well-spread vectors, that is, $\|\vx\|_\infty=O(n^{-1/2}\|\vx\|_2)$ for all $\vx\in\mathcal{T}$, then
\begin{equation}\label{CSJLT-dist}
\Bigl|\|\widetilde{\mV}\rmA(\vx-\vy)\|_1-\|\vx-\vy\|_2\Bigr|\leq\epsilon\|\vx-\vy\|_2
\end{equation}
holds uniformly for all $\vx, \vy\in \mathcal{T}$ with probability at least $1-\delta$.
If $\mathcal{T}$ is finite but arbitrary, then
\begin{equation}\label{CFJLT-dist}
\Bigl|\|\widetilde{\mV}\rmPhi(\vx-\vy)\|_1-\|\vx-\vy\|_2\Bigr|\leq\epsilon\|\vx-\vy\|_2
\end{equation}
holds uniformly for all $\vx, \vy\in \mathcal{T}$ with probability at least $1-\delta$.
\end{lemma}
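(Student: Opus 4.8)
The plan is to exploit the block structure of the condensed map, reduce everything to a concentration statement for a single fixed difference vector $\vw=\vx-\vy$, and then union-bound over the at most $\binom{|\mathcal{T}|}{2}$ such differences. Writing $\rmA_k$ for the $k$-th row of $\rmA$ and recalling $\mV=\mI_p\otimes\vv$, each of the $p$ blocks of $\mV\rmA$ contributes the single coordinate $(\mV\rmA\vw)_i=\langle\vg_i,\vw\rangle$ with $\vg_i:=\sum_{j=1}^{\lambda}v_j\,\rmA_{(i-1)\lambda+j}$. Since distinct blocks use disjoint rows of $\rmA$, the variables $\langle\vg_1,\vw\rangle,\dots,\langle\vg_p,\vw\rangle$ are i.i.d., and after normalization
\[
\|\widetilde{\mV}\rmA\vw\|_1=\frac{\sqrt{\pi/2}}{p\|\vv\|_2}\sum_{i=1}^p|\langle\vg_i,\vw\rangle|.
\]
So it suffices to show that $\E|\langle\vg_1,\vw\rangle|=\sqrt{2/\pi}\,\|\vv\|_2\|\vw\|_2\,(1+O(\epsilon))$ and that the i.i.d. average concentrates; the two displayed claims then differ only in which vector plays the role of the fixed well-spread input, as explained at the end.

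For the first-moment computation I would condition on the Bernoulli support pattern $\xi_{k\ell}=\mathbbm{1}[\rmA_{k\ell}\neq 0]$. Conditionally on $\xi$, each nonzero $\rmA_{k\ell}$ is an independent $\mathcal{N}(0,1/s)$, so $\langle\vg_1,\vw\rangle$ is a centered Gaussian with variance $\sigma_\xi^2=\sum_{j=1}^{\lambda}v_j^2\,\tfrac{1}{s}\sum_\ell\xi_{j\ell}\,w_\ell^2$, whence $\E|\langle\vg_1,\vw\rangle|=\sqrt{2/\pi}\,\E\sqrt{\sigma_\xi^2}$. A direct computation gives $\E\sigma_\xi^2=\|\vv\|_2^2\|\vw\|_2^2$, and since the per-block contributions are independent, $\Var(\sigma_\xi^2)\lesssim\|\vv\|_4^4\,\tfrac{1}{s}\sum_\ell w_\ell^4\le\|\vv\|_\infty^2\|\vv\|_2^2\,\tfrac{1}{s}\|\vw\|_\infty^2\|\vw\|_2^2$. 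Invoking well-spreadness of the difference, $\|\vw\|_\infty^2=O(n^{-1}\|\vw\|_2^2)$, together with the calibrated sparsity $s=\Theta(\epsilon^{-1}n^{-1}(\|\vv\|_\infty/\|\vv\|_2)^2)$, makes $\Var(\sigma_\xi^2)=O(\epsilon)\,(\E\sigma_\xi^2)^2$. A second-order Taylor (Jensen-gap) estimate for $\E\sqrt{\sigma_\xi^2}$ then yields $\E\sqrt{\sigma_\xi^2}=\|\vv\|_2\|\vw\|_2(1+O(\epsilon))$, which is exactly the target first moment. This is the step that forces $s$ to its stated value, and I expect it to be the main obstacle: it is the only place where the non-Gaussianity introduced by the sparse support must be paid for, and the choice of $s$ is precisely what keeps that payment at order $\epsilon$.

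The concentration step is then routine. Each $|\langle\vg_i,\vw\rangle|$ is (conditionally Gaussian, hence) sub-exponential with scale $\Theta(\|\vv\|_2\|\vw\|_2)$ on the typical support patterns, so Bernstein's inequality applied to the i.i.d. sum $\frac1p\sum_i|\langle\vg_i,\vw\rangle|$ gives a relative deviation of size $\epsilon$ from its mean with failure probability $\exp(-c\,\epsilon^2 p)$. Choosing $p=O(\epsilon^{-2}\log(|\mathcal{T}|^2/\delta))$ makes this at most $\delta/\binom{|\mathcal{T}|}{2}$, and a union bound over all differences $\vx-\vy$ delivers \eqref{CSJLT-dist} uniformly with probability $1-\delta$, after absorbing the $O(\epsilon)$ first-moment bias into a rescaled $\epsilon$.

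Finally, for arbitrary (not necessarily well-spread) $\mathcal{T}$ I would invoke the standard FJLT preconditioning of Definition~\ref{FJLT}. Because $\mH\rmD$ is orthogonal it preserves $\|\vx-\vy\|_2$, while the random sign flips $\rmD$ guarantee that $\|\mH\rmD(\vx-\vy)\|_\infty=O\bigl(\sqrt{n^{-1}\log(n|\mathcal{T}|^2/\delta)}\,\|\vx-\vy\|_2\bigr)$ simultaneously for all pairs with probability $1-\delta$; this weakened well-spreadness (with its $\log n$ factor) is exactly what the extra $\log n$ in the sparsity $s$ for $\rmPhi$ compensates. Conditioning on this event and using that $\rmA$ is independent of $\mH\rmD$, the vector $\mH\rmD(\vx-\vy)$ is a fixed well-spread input to which the argument of the previous paragraphs applies verbatim, giving \eqref{CFJLT-dist}.
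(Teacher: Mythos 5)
Your proposal is correct in substance, but it takes a genuinely different route from the paper: where the paper reduces to a known theorem, you re-prove that theorem from scratch. The paper's proof reshapes the condensed map, observing that $\mV\rmA\vy=\rmB(\vv^\tp\otimes\vy)$ where $\rmB\in\R^{p\times\lambda n}$ is itself a sparse Gaussian matrix obtained by rearranging the rows of $\rmA$, and that $\|\vv^\tp\otimes\vy\|_2=\|\vv\|_2\|\vy\|_2$ and $\|\vv^\tp\otimes\vy\|_\infty=\|\vv\|_\infty\|\vy\|_\infty$; it then cites Matou\v{s}ek's Theorem~5.1 (Lemma~\ref{SJLT}) as a black box with $\alpha=\Theta(n^{-1/2}\|\vv\|_\infty/\|\vv\|_2)$, and finishes with the same pairwise union bound and the same $\mH\rmD$ preconditioning (Lemma~\ref{HD}) that you use for arbitrary $\mathcal{T}$. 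Your per-block variables $\langle\vg_i,\vw\rangle$ are exactly the coordinates of $\rmB(\vv^\tp\otimes\vw)$, so the underlying decomposition is identical; what differs is that your first-moment computation and Bernstein step together constitute a self-contained proof of the relevant case of Lemma~\ref{SJLT}. This buys transparency: it makes explicit that the calibration $s=\Theta(\epsilon^{-1}n^{-1}(\|\vv\|_\infty/\|\vv\|_2)^2)$ is spent in exactly two places, the bias of the $\ell_1$ estimator and the moment-generating-function cutoff in Bernstein, which the paper hides inside the citation. The paper's route buys brevity and rigor by reference.

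Two steps of yours need polish, though neither is fatal. First, a naive second-order Taylor expansion of $\sqrt{\cdot}$ fails when $\sigma_\xi^2$ is near zero; the clean argument is the identity (valid since $\E[\sigma_\xi^2]=\mu^2$)
\begin{equation*}
\mu-\E\,\sigma_\xi \;=\; \frac{\E\bigl[(\mu-\sigma_\xi)^2\bigr]}{2\mu}\;\leq\;\frac{\Var(\sigma_\xi^2)}{2\mu^{3}},
\end{equation*}
which with your variance bound gives $\E\,\sigma_\xi\geq \mu(1-O(\epsilon))$ as claimed. Second, ``sub-exponential on the typical support patterns'' is not a legitimate conditioning argument as stated; the fix is the unconditional bound $\E e^{t\langle\vg_i,\vw\rangle}\leq e^{t^2\|\vv\|_2^2\|\vw\|_2^2}$ for $|t|\leq\sqrt{2s}/(\|\vv\|_\infty\|\vw\|_\infty)$ (the same computation the paper performs inside Lemma~\ref{concentration}), and with your $s$ and well-spread $\vw$ this cutoff is $\gtrsim(\sqrt{\epsilon}\,\|\vv\|_2\|\vw\|_2)^{-1}$, which yields the stated $e^{-c\epsilon^2p}$ failure probability. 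Finally, note that both you (``well-spreadness of the difference'') and the paper silently require $\vx-\vy$, not merely $\vx$ and $\vy$, to be well-spread in the sparse-Gaussian case, and both treat the $\delta$ and $|\mathcal{T}|$ dependence of the logarithmic factor in the FJLT sparsity loosely; on these two points your proposal is no worse than the paper's own argument.
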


So $\mathcal{T}\subseteq\mathbb{R}^n$ is embedded into $\mathbb{R}^p$ with pairwise distances distorted at most $\epsilon$, where $p=O(\epsilon^{-2}\log|\mathcal{T}|)$ as one would expect from a JL embedding. This will be needed to guarantee the accuracy associated with our embeddings algorithms.  Note that the bound on $p$ does not require extra logarithmic factors, in contrast to the bound $ O(\epsilon^{-2}\log|\mathcal{T}|\log^4 n)$ in \citep{Saab}.

\section{Sigma-Delta quantization}\label{quantization}
An $r$-th order $\Sigma\Delta$ quantizer $Q^{(r)}:\mathbb{R}^m\rightarrow\mathcal{A}^m$ maps an input signal $\vy=(y_i)_{i=1}^m\in\mathbb{R}^m$ to a quantized sequence $\vq=(q_i)_{i=1}^m\in\mathcal{A}^m$ via a quantization rule $\rho$ and the following iterations
\begin{equation}\label{general-quantizer}
\begin{cases}
u_0=u_{-1}=\ldots=u_{1-r}=0,\\
q_i=Q(\rho(y_i,u_{i-1},\ldots,u_{i-r}))&\text{for}\; i=1,2,\ldots,m,\\
\mP^r\vu=\vy-\vq
\end{cases}
\end{equation}
where $Q(y)=\argmin_{v\in\mathcal{A}}|y-v|$ is the scalar quantizer related to alphabet $\mathcal{A}$ and $\mP\in\mathbb{R}^{m\times m}$ is the first order difference matrix defined by
\[
P_{ij}=
\begin{cases}
	1 &\text{if}\; i=j,\\
	-1&\text{if}\; i=j+1,\\
	0&\text{otherwise}.
\end{cases}
\]
Note that \eqref{general-quantizer} is amenable to an iterative update of the state variables $u_i$ as 
\begin{equation}\label{difference-eq}
\mP^r\vu=\vy-\vq \iff u_i =\sum_{j=1}^r (-1)^{j-1}\binom{r}{j}u_{i-j}+y_i-q_i,\quad i=1,2,\ldots,m.
\end{equation}
\begin{definition}
A quantization scheme is \emph{stable} if there exists $\mu>0$ such that for each input with $\|\vy\|_\infty\leq \mu$, the state vector $\vu\in\mathbb{R}^m$ satisfies $\|\vu\|_\infty\leq C$. Crucially, $\mu$ and $C$ do not depend on $m$.
\end{definition}
Stability heavily depends on the choice of quantization rule and  is difficult to guarantee for arbitrary $\rho$ in \eqref{general-quantizer} when the alphabet is small, as is the case of $1$-bit quantization where $\mathcal{A}=\{\pm1\}$. When $r=1$ and  $\mathcal{A}=\{\pm1\}$, the simplest stable $\Sigma\Delta$ scheme $Q^{(1)}:\mathbb{R}^m\rightarrow\mathcal{A}^m$ is equipped with the greedy quantization rule $\rho(y_i,u_{i-1}):=u_{i-1}+y_i$ giving the simple iteration \eqref{SD_1st} from the introduction, albeit with $y_i$ replacing $w_i$.
A description of the design and properties of stable $Q^{(r)}$ with $r\geq2$ can be found in Appendix~\ref{appendix-quantization}.

\section{Main Results}\label{main-section}
The ingredients that make our construction work are a JL embedding followed by $\Sigma\Delta$ quantization. Together these embed points into  $\{\pm 1\}^m$, but it remains to define a pseudometric so that we may approximate Euclidean distances by distances on the cube. We now define this pseudometric.

\begin{definition} Let $\mathcal{A}^m=\{\pm 1\}^m$ and let $\mV\in\mathbb{R}^{p\times m}$ with $p\leq m$. We  define $d_{\mV}$ on $\mathcal{A}^m\times\mathcal{A}^m$ as 
\[
d_\mV(\vq_1,\vq_2)=\|\mV(\vq_1-\vq_2)\|_1\quad \forall \, \vq_1, \vq_2 \in \mathcal{A}^m.
\]	
\end{definition}

We now present our main result, a more technical version of Theorem \ref{thm:main_intro}, proved in Appendix~\ref{appendix-main}.

\begin{theorem}[Main result]\label{main-results}
Let $\lambda$, $r\in\mathbb{N}$, $\epsilon\in(0,\frac{1}{2})$, $\delta\in (0,1)$,  $\beta=\Omega(\log(|\mathcal{T}|/\delta))>0$, $\mu\in(0,1)$, $p=\Omega(\epsilon^{-2}\log(|\mathcal{T}|^2/\delta))\in\mathbb{N}$, and $m=\lambda p$. Let $\widetilde{\mV}\in\mathbb{R}^{p\times m}$ be as in Definition~\ref{condensation}, $\rmA\in\mathbb{R}^{m\times n}$ be the sparse Gaussian matrix in Definition~\ref{sparse-gaussian} with $s=\Theta(\epsilon^{-1}n^{-1}(\|\vv\|_\infty/\|\vv\|_2)^2)\leq 1$, and $\rmPhi$ be the FJLT in Definition \ref{FJLT} with $s=\Theta(\epsilon^{-1}n^{-1}(\|\vv\|_\infty/\|\vv\|_2)^2\log n)\leq 1$.

Let $\mathcal{T}$ be a  finite subset of $B_2^n(\kappa):=\{\vx\in\mathbb{R}^n:\|\vx\|_2\leq\kappa\}$ and suppose that
\[
	\kappa \leq \frac{\mu}{2\sqrt{\beta+\log(2m)}}.
\]
Defining the embedding maps $f_1:\mathcal{T}\rightarrow \{\pm 1\}^m$ by $f_1=Q^{(r)}\circ \rmA$ and $f_2:\mathcal{T}\rightarrow \{\pm 1\}^m$ by $f_2=Q^{(r)}\circ \rmPhi$,  there exists a constant $C(\mu,r)$ such that the following are true: \smallskip
\begin{itemize}\item[(i)]
If the elements of $\mathcal{T}$ satisfy $\|\vx\|_\infty=O(n^{-1/2}\|\vx\|_2)$, then the bound
\begin{equation}\label{main-results-eq0}
\Bigl|d_{\widetilde{\mV}}(f_1(\vx),f_1(\vy))-\|\vx-\vy\|_2\Bigr|
\leq  C(\mu,r)\lambda^{-r+1/2}+\epsilon\|\vx-\vy\|_2
\end{equation}
holds uniformly for all $\vx, \vy\in \mathcal{T}$ with probability exceeding $1-\delta-|\mathcal{T}|e^{-\beta}$. 

\item[(ii)]On the other hand, for arbitrary $\mathcal{T} \subset B_2^n(\kappa)$
\begin{equation}\label{main-results-eq1}
\Bigl|d_{\widetilde{\mV}}(f_2(\vx),f_2(\vy))-\|\vx-\vy\|_2\Bigr|
\leq  C(\mu,r)\lambda^{-r+1/2}+\epsilon\|\vx-\vy\|_2
\end{equation}
holds uniformly for any $\vx, \vy\in \mathcal{T}$ with probability exceeding $1-\delta-2|\mathcal{T}|e^{-\beta}$.
\end{itemize}
\end{theorem}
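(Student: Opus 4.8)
The plan is to reduce the theorem to two ingredients that are already in hand: the linear distance preservation guaranteed by the CJLT Lemma~\ref{CJLT-Corollary2}, and the noise-shaping property of the normalized condensation operator $\widetilde{\mV}$ acting on the quantization error. The bridge between them is the defining identity of the $\Sigma\Delta$ scheme \eqref{general-quantizer}. For a fixed pair $\vx,\vy\in\mathcal{T}$, write $\vq_\vx=f_1(\vx)=Q^{(r)}(\rmA\vx)$ and $\vq_\vy=f_1(\vy)$, and let $\vu_\vx,\vu_\vy$ be the corresponding state vectors, so that $\mP^r\vu_\vx=\rmA\vx-\vq_\vx$ and likewise for $\vy$. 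Subtracting gives $\vq_\vx-\vq_\vy=\rmA(\vx-\vy)-\mP^r(\vu_\vx-\vu_\vy)$. Applying $\widetilde{\mV}$, taking $\ell_1$ norms, and using the triangle inequality yields
\[
\Bigl|\,\|\widetilde{\mV}(\vq_\vx-\vq_\vy)\|_1-\|\widetilde{\mV}\rmA(\vx-\vy)\|_1\,\Bigr|\le \|\widetilde{\mV}\mP^r(\vu_\vx-\vu_\vy)\|_1 .
\]
A further triangle inequality against $\|\vx-\vy\|_2$ splits the target error into a linear-embedding part $\bigl|\|\widetilde{\mV}\rmA(\vx-\vy)\|_1-\|\vx-\vy\|_2\bigr|$ and a quantization part $\|\widetilde{\mV}\mP^r(\vu_\vx-\vu_\vy)\|_1$.

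For the linear part I would invoke Lemma~\ref{CJLT-Corollary2} directly: under the well-spread hypothesis and the stated choice of sparsity $s$, bound \eqref{CSJLT-dist} controls this term by $\epsilon\|\vx-\vy\|_2$ uniformly over $\mathcal{T}$ with probability at least $1-\delta$, producing exactly the second summand of \eqref{main-results-eq0}. For the quantization part I would estimate $\|\widetilde{\mV}\mP^r(\vu_\vx-\vu_\vy)\|_1\le\|\widetilde{\mV}\mP^r\|_{\infty\to1}\,\|\vu_\vx-\vu_\vy\|_\infty$ and control the two factors separately. The operator-norm factor is the noise-shaping estimate: since the entries of $\vv$ are the coefficients of $(1+z+\cdots+z^{\widetilde{\lambda}-1})^r$, the row vector $\vv\mP^r$ corresponds (up to a shift and sign) to the coefficients of $(1-z^{\widetilde{\lambda}})^r$, whose $\ell_1$ norm is $2^r$ independently of $\lambda$; combining this with the block-diagonal structure $\mV=\mI_p\otimes\vv$ of Definition~\ref{condensation}, the normalization $\widetilde{\mV}=\tfrac{\sqrt{\pi/2}}{p\|\vv\|_2}\mV$, and the estimate $\|\vv\|_2=\Theta(\widetilde{\lambda}^{\,r-1/2})$, gives $\|\widetilde{\mV}\mP^r\|_{\infty\to1}\le C(r)\lambda^{-r+1/2}$. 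The second factor is controlled by stability: provided $\|\rmA\vx\|_\infty\le\mu$ and $\|\rmA\vy\|_\infty\le\mu$, the stable $Q^{(r)}$ guarantees $\|\vu_\vx\|_\infty,\|\vu_\vy\|_\infty\le C$, hence $\|\vu_\vx-\vu_\vy\|_\infty\le2C$. Multiplying the two bounds yields the quantization summand $C(\mu,r)\lambda^{-r+1/2}$.

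The step that needs real care — and the main obstacle — is verifying the stability precondition $\|\rmA\vx\|_\infty\le\mu$ uniformly over the (now random) inputs, since the quantizer is only stable on bounded signals. Each coordinate $(\rmA\vx)_i=\sum_j\ra_{ij}x_j$ has variance $\|\vx\|_2^2\le\kappa^2$; using the well-spread bound $\|\vx\|_\infty=O(n^{-1/2}\|\vx\|_2)$ to guarantee that the conditional variance concentrates and that $(\rmA\vx)_i$ is sub-Gaussian with the right parameter, a tail bound gives $\Pr\bigl(|(\rmA\vx)_i|>\mu\bigr)\le 2\exp(-c\mu^2/\kappa^2)$. The hypothesis $\kappa\le\mu/\bigl(2\sqrt{\beta+\log(2m)}\bigr)$ is precisely what makes this tail smaller than $e^{-\beta}/m$, so a union bound over the $m$ coordinates and over $\mathcal{T}$ controls the total failure probability by $|\mathcal{T}|e^{-\beta}$. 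Intersecting this stability event with the CJLT event and combining the two summands proves part (i) with probability at least $1-\delta-|\mathcal{T}|e^{-\beta}$.

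Part (ii) follows the same template with $\rmPhi=\rmA\mH\rmD$ in place of $\rmA$ and the CFJLT bound \eqref{CFJLT-dist} in place of \eqref{CSJLT-dist}. The only new point is that an arbitrary $\vx$ need not be well-spread, so stability cannot be argued directly; instead I would first show that the preconditioned vector $\mH\rmD\vx$ is well-spread with high probability — the standard flattening property of the Walsh--Hadamard/random-sign transform — and only then apply the sparse-Gaussian tail bound to $\rmA(\mH\rmD\vx)$. This introduces a second high-probability event, over the draw of $\rmD$, which accounts for the extra $|\mathcal{T}|e^{-\beta}$ term and yields the probability $1-\delta-2|\mathcal{T}|e^{-\beta}$.
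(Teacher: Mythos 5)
Your proposal is correct and follows essentially the same route as the paper's own proof: the same triangle-inequality decomposition into a quantization term bounded by $\|\widetilde{\mV}\mP^r\|_{\infty,1}$ times the stability bound on the state vectors, the same invocation of Lemma~\ref{CJLT-Corollary2} for the linear part, and the same sub-Gaussian tail plus union-bound argument (exploiting $\kappa \leq \mu/(2\sqrt{\beta+\log(2m)})$ and, for part (ii), the Walsh--Hadamard flattening) to verify the stability precondition, with matching probability accounting. The only cosmetic differences are that you re-derive inline what the paper isolates as Lemma~\ref{inf-1 norm} and Lemma~\ref{concentration}.
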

Under the assumptions of Theorem \ref{main-results}, we have 
\begin{equation}\label{epsilon}
	\epsilon = O\biggl(\sqrt{\frac{\log(|\mathcal{T}|^2/\delta)}{p}}\biggr)\lesssim \frac{1}{\sqrt{p}}.
\end{equation}
By \eqref{main-results-eq0}, \eqref{main-results-eq1} and \eqref{epsilon}, we have that with high probability the inequality
\begin{align}\label{error-analysis}
	\Bigl|d_{\widetilde{\mV}}(f_i(\vx),f_i(\vy))-\|\vx-\vy\|_2\Bigr|&\leq 
	 C(\mu,r) \Bigl(\frac{m}{p}\Bigr)^{-r+1/2}+\epsilon\|\vx-\vy\|_2 \nonumber\\
	 &\leq 	 C(\mu,r) \Bigl(\frac{m}{p}\Bigr)^{-r+1/2}+2\kappa\epsilon\nonumber\\
	 &\leq  C(\mu,r) \Bigl(\frac{m}{p}\Bigr)^{-r+1/2}+\frac{\mu}{\sqrt{\beta+\log(2m)}}\cdot\frac{C_2}{\sqrt{p}}
\end{align} 
holds uniformly for $\vx,\vy\in\mathcal{T}$.
The first error term in \eqref{error-analysis} results from $\Sigma\Delta$ quantization while the second error term is caused by the CJLT. So the term $O((m/p)^{-r+1/2})$ dominates  when $\lambda=m/p$ is small. If $m/p$ is sufficiently large, the second term $O(1/\sqrt{p})$ becomes dominant. 

\section{Computational and Space Complexity}\label{complexity}
In this section, we assume that $\mathcal{T}=\{\vx^{(j)}\}_{j=1}^k\subseteq\mathbb{R}^n$ consists of well-spread vectors. Moreover, we will focus on stable $r$-th order $\Sigma\Delta$ schemes $Q^{(r)}:\mathbb{R}^m\rightarrow\mathcal{A}^m$ with $\mathcal{A}=\{-1,1\}$. By Definition \ref{condensation}, when $r=1$ we have $\vv=(1,1,\ldots,1)\in\mathbb{R}^\lambda$, while when $r=2$, $\vv=(1,2,\ldots,\widetilde{\lambda}-1,\widetilde{\lambda},\widetilde{\lambda}-1,\ldots,2,1)\in\mathbb{R}^\lambda$. In general,  $\|\vv\|_\infty/\|\vv\|_2=O(\lambda^{-1/2})$ holds for all $r\in\mathbb{N}$. We also assume that $s=\Theta(\epsilon^{-1}n^{-1}(\|\vv\|_\infty/\|\vv\|_2)^2)=\Theta(\epsilon^{-1}n^{-1}\lambda^{-1})\leq 1$ as in Theorem \ref{main-results}. We consider $b$-bit floating-point or fixed-point representations for numbers. Both entail the same computational complexity for computing sums and products of two numbers. Addition and subtraction require $O(b)$ operations while multiplication and division require $\mathcal{M}(b)=O(b^2)$ operations via ``standard" long multiplication and division. Multiplication and division can be done more efficiently, particularly for large integers and the best known methods (and best possible up to constants) have complexity $\mathcal{M}(b) = O(b\log b) $ \citep{Harvey}. We also assume random access to the coordinates of our data points. 

\textbf{Embedding complexity.} For each data point $\vx^{(j)}\in\mathcal{T}$, one can use Algorithm \ref{algorithm1} to quantize it. Since $\rmA$ has sparsity constant $s=\Theta(\epsilon^{-1}n^{-1}\lambda^{-1})$ and $\epsilon^{-1}=O(p^{1/2})$ by \eqref{epsilon}, and since $\lambda= m/p$, computing $\rmA\vx^{(j)}$ needs $O(snm)=O(\lambda^{-1}\epsilon^{-1}m)=O(p^{3/2})$ time. Additionally, it takes $O(m)$ time to quantize $\rmA\vx^{(j)}$ based on \eqref{general-quantizer-new}. When $p^{3/2}\leq m$, Algorithm \ref{algorithm1} can be executed in $O(m)$ for each $\vx^{(j)}$. Because $\rmA$ has $O(snm)=O(m)$ nonzero entries, the space complexity is $O(m)$ bits per data point. Note that the big $O$ notation here hides the space complexity dependence on the bit-depth $b$ of the fixed or floating point representation of the entries of $\rmA$ and $\vx^{(j)}$. This clearly has no effect on the storage space needed for each $\vq^{(j)}$, which is exactly $m$ bits.

\textbf{Complexity of distance estimation.}\label{ComplexityDistance}
If one does not use embedding methods, storing $\mathcal{T}$ directly, i.e., by representing the coefficients of each $\vx^{(j)}$ by $b$ bits requires $knb$ bits. Moreover, the resulting computational complexity of estimating $\|\vx-\vy\|_2^2$ where $\vx,\vy\in\mathcal{T}$ is $O(n \mathcal{M}(b))$.
On the other hand, suppose we obtain binary sequences 
 $\mathcal{B}=\{\vq^{(j)} \}_{j=1}^k\subseteq\mathcal{A}^m$ by performing Algorithm \ref{algorithm1} on $\mathcal{T}$. Using our method with accuracy guaranteed by Theorem \ref{main-results}, high-dimensional data points $\mathcal{T}\subseteq\mathbb{R}^n$ are now transformed into short binary sequences, which only require $km$ bits of storage instead of $knb$ bits. Algorithm \ref{algorithm2} can be applied to recover the pairwise $\ell_2$ distances. Note that $\widetilde{\mV}$ is the normalization of an integer valued matrix $\mV=\mI_p\otimes \vv$ (by Definition \ref{condensation}) and $\vq^{(i)}\in\mathcal{A}^m$ is a binary vector. So, by storing the normalization factor separately, we can ignore it when considering  runtime and space complexity. Thus we observe:
\begin{enumerate}
	\item   The number of bits needed to represent each entry of $\vv$ is at most $\log_2(\|\vv\|_\infty) \approx ( r-1)\log_2\lambda = O(\log_2\lambda)$ when $r>1$ and $O(1)$ when $r=1$.
	So the computation of $\vy^{(i)}=\widetilde{\mV}\vq^{(i)}\in\mathbb{R}^p$ only involves $m$ additions or subtractions of integers represented by $O(\log_2 \lambda)$ bits and thus the  time complexity in computing $\vy^{(i)}$ is $O(m\log_2 \lambda)$.
	\item Each of the $p$ entries of $\vy^{(i)}$ is the sum of $\lambda$ terms each bounded by $\lambda^{r-1}$. We can store $\vy^{(i)}$ in $O(p\log_2\lambda)$ bits.
	\item Computing $\|\vy^{(i)}-\vy^{(j)}\|_1$ needs $O(p\log_2\lambda)$ time and $O(p\log_2\lambda)$ bits. 
\end{enumerate}
So we use $O(p\log_2\lambda)$ bits to recover each pairwise distance $\|\vx^{(i)}-\vx^{(j)}\|_2$ in $O(m\log_2\lambda)$ time.  
\begin{table}[H] 
\caption{ Here ``Time" is the time needed to embed a data point, while ``Space" is the space needed to store the embedding matrix. ``Storage" contains the memory usage to store each encoded sequence.
``Query time" is the time complexity of pairwise distance estimation.}
 
\label{table:comparison}  
  \begin{threeparttable}\begin{tabular}{|c||c|c|c|c|c|c|}
    \hline
Method & Time & Space & Storage & Query Time\\
     \hline
Gaussian Toeplitz \citep{Yi} & $O(n\log n)$ & $O(n)$ &$O(m)$ & $O(m)$\\
    \hline
Bilinear  \citep{Gong3} & $O(n\sqrt{m})$ & $O(\sqrt{mn})$ &$O(m)$ & $O(m)$\\
    \hline
Circulant  \citep{Yu} & $O(n\log n)$  & $O(n)$ & $O(m)$ & $O(m)$\\
\hline
BOE or PCE${}^\star$ \citep{Saab} & $O(n\log n)$ & $O(n)$ & $O(p\log_2\lambda)$ & $O(p\mathcal{M}(\log_2\lambda))$ \\
\hline
Our Algorithm${}^\star$ (on well-spread $\mathcal{T}$) &$O(m)$ & $O(m)$ & $O(p\log_2\lambda)$ & $O(p\log_2\lambda)$\\
\hline 
\end{tabular}
    \begin{tablenotes}
      \small
      \item  ${}^\star$ These algorithms recover Euclidean distances and others recover geodesic distances.
    \end{tablenotes}
  \end{threeparttable}
\end{table} 
\textbf{Comparisons with baselines.} 
In Table~\ref{table:comparison}, we compare our algorithm with various JL-based methods from Section \ref{intro}. 
Here $n$ is the input dimension, $m$ is the embedding dimension (and number of bits), and $p=m/\lambda$ is the length of encoded sequences $\vy=\widetilde{\mV}\vq$. In our case, we use $O(p\log_2\lambda)$ to store $\vy=\widetilde{\mV}\vq$. 
 See Appendix~\ref{product-quantization} for a comparison with product quantization.

\begin{figure}[ht]
    \centering \vspace{-10pt}
    \subfigure[MAPE of Method 1 $(r=1)$]
    {
        \includegraphics[scale=0.45]{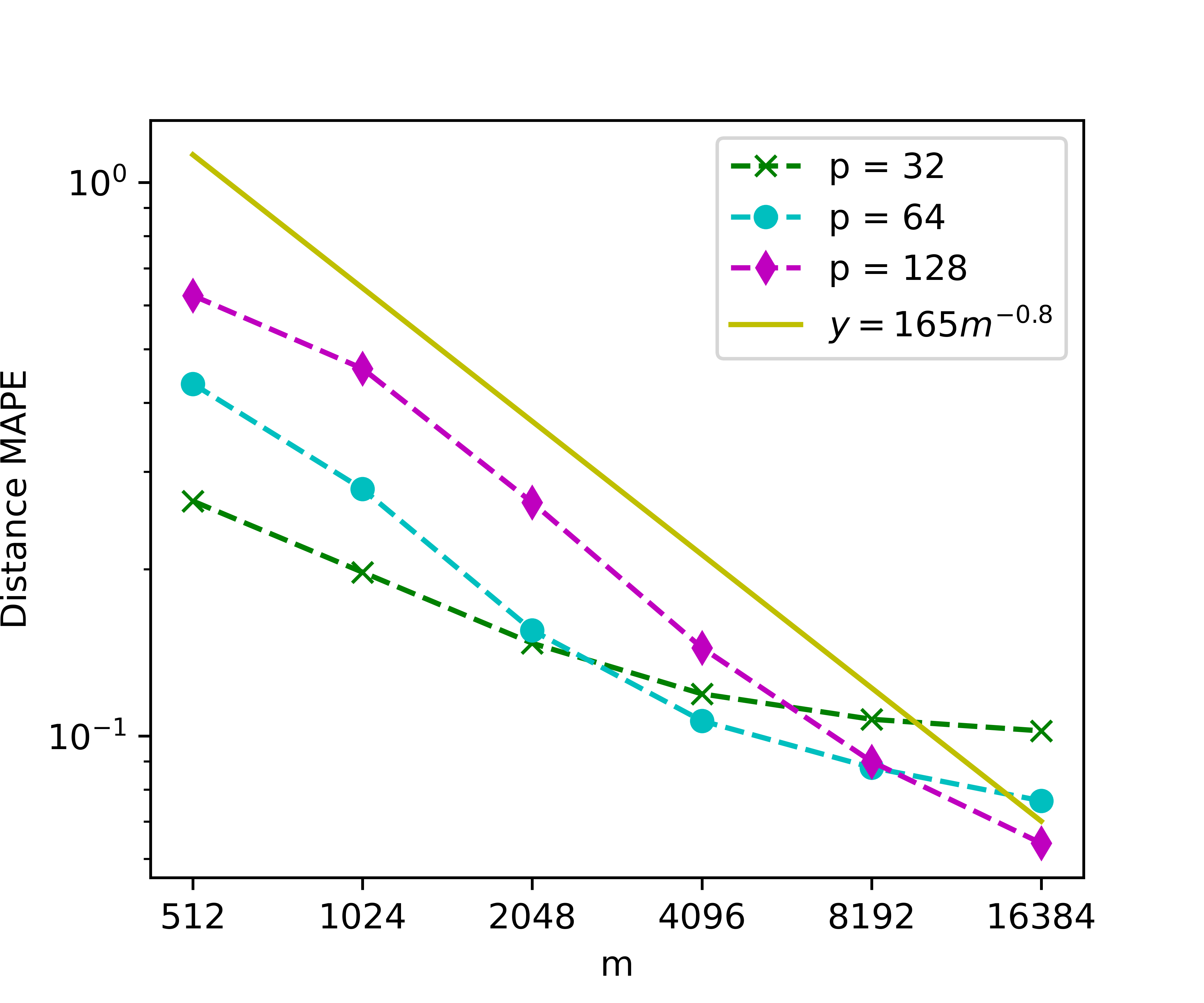}
    }
    \subfigure[MAPE of Method 2 $(r=1)$]
    {
        \includegraphics[scale=0.45]{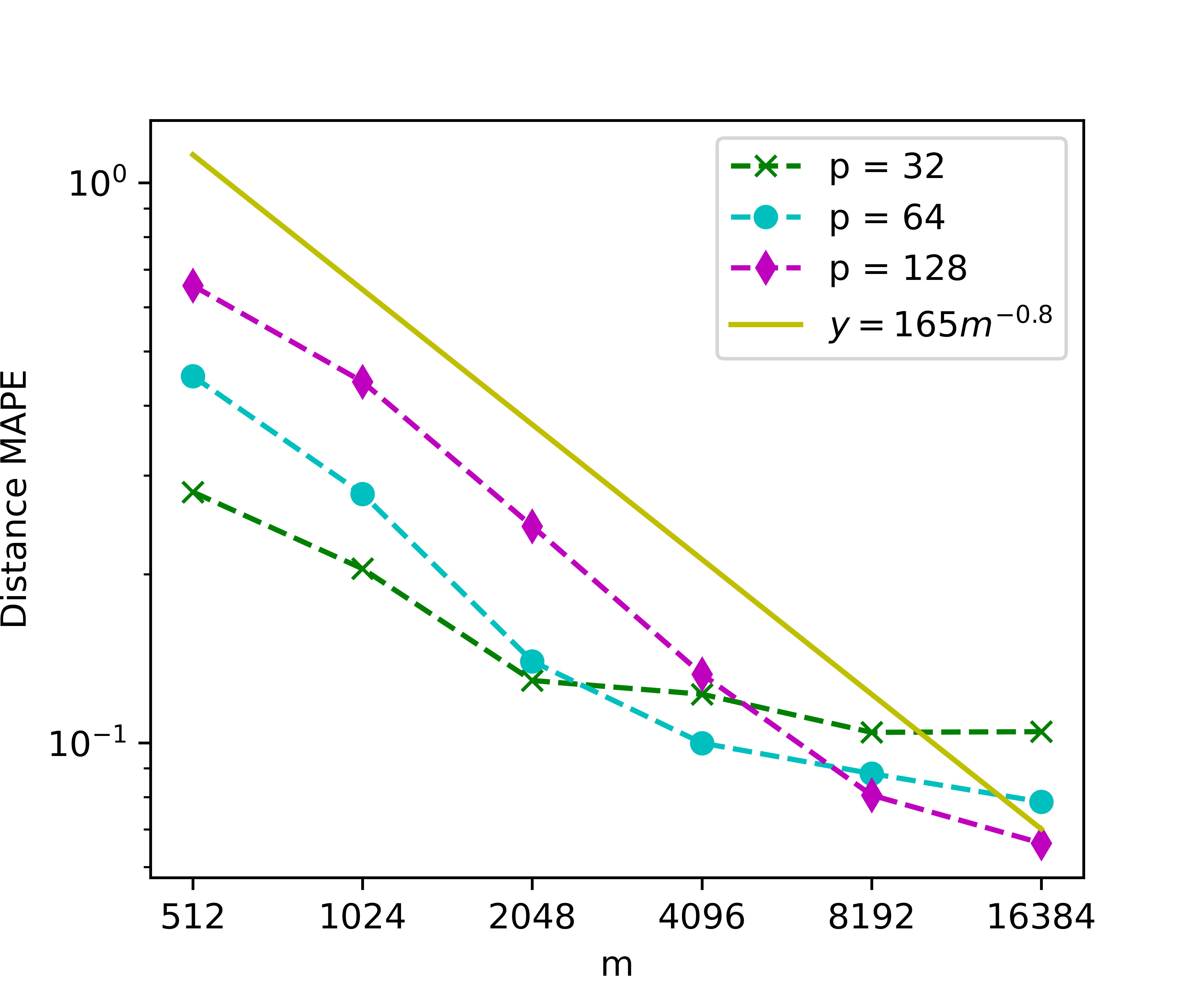}
    }\\\vspace{-5pt}
    \subfigure[MAPE of Method 1 $(r=2)$]
    {
        \includegraphics[scale=0.45]{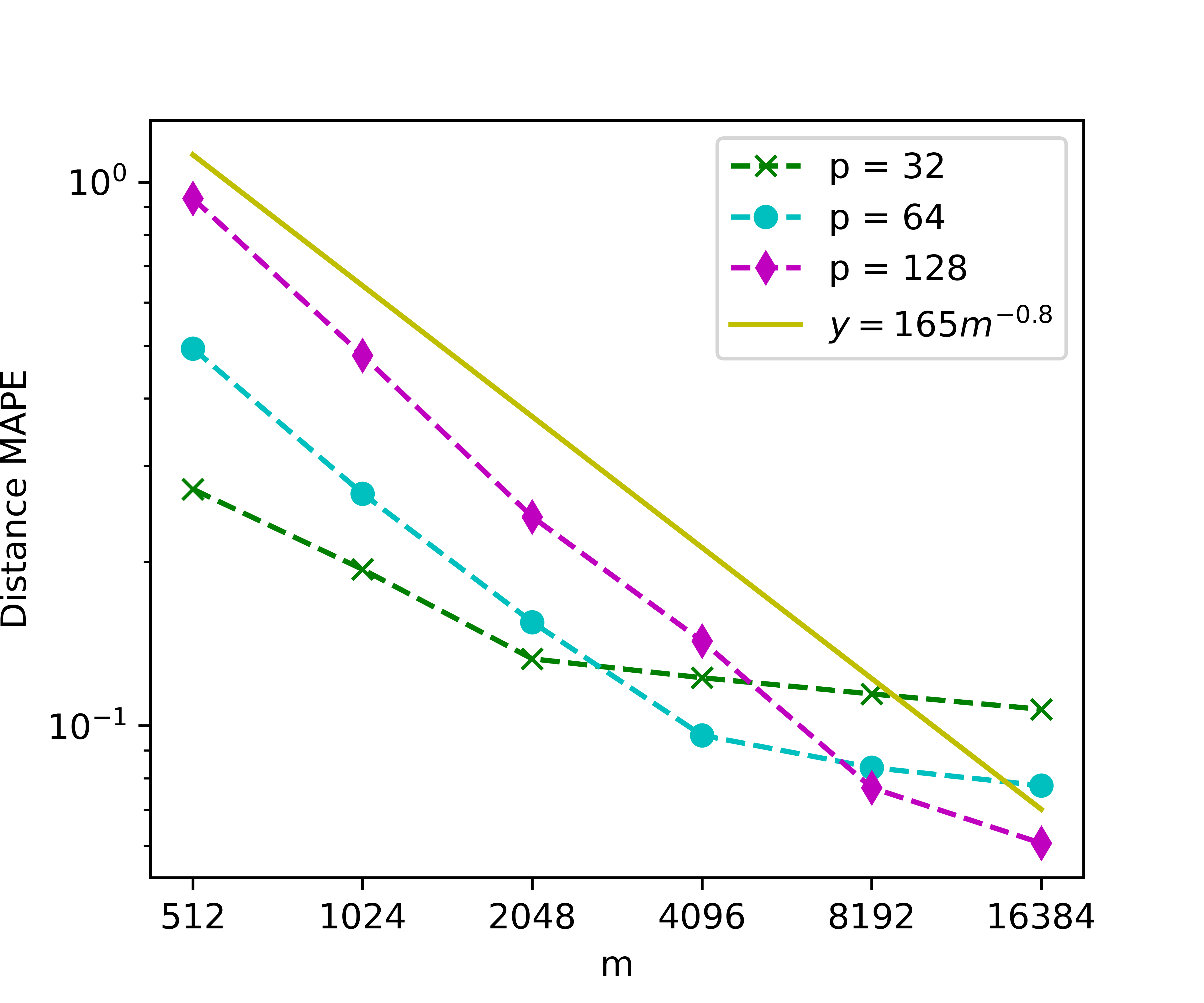}
    }
        \subfigure[MAPE of Method 2 $(r=2)$]
    {
        \includegraphics[scale=0.45]{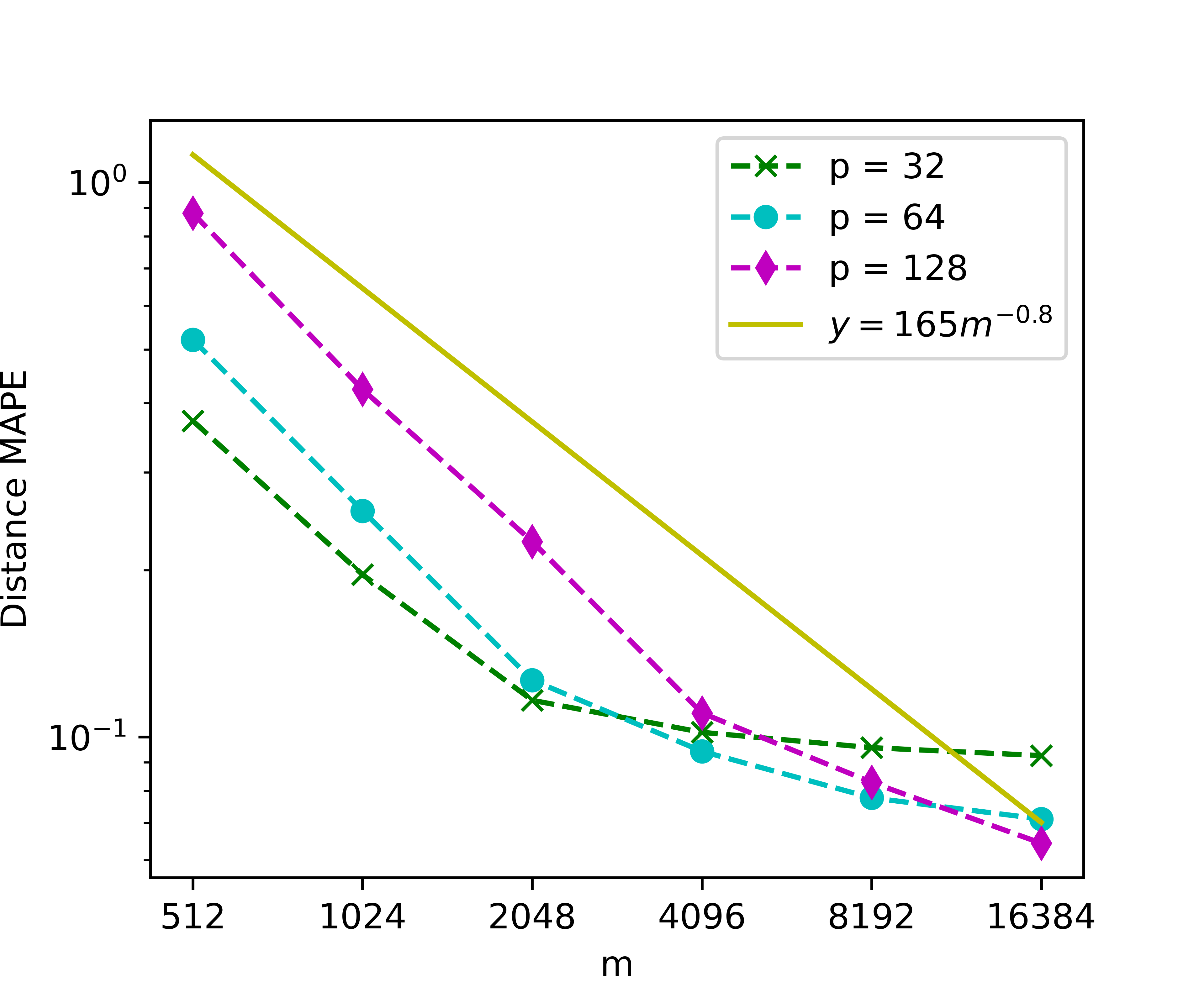}
    }
    \vspace{-10pt}
    \caption{Plots of $\ell_2$ distance reconstruction error when $r=1,2$ \vspace{-5pt}}
    \label{fig-dist}
\end{figure}

\section{Numerical Experiments}\label{experiment}
 To illustrate the performance of our fast binary embedding (Algorithm \ref{algorithm1}) and $\ell_2$ distance recovery (Algorithm \ref{algorithm2}), we apply them to real-world datasets: Yelp open dataset\footnote{\href{https://www.yelp.com/dataset}{Yelp open dataset: https://www.yelp.com/dataset}}, ImageNet \citep{imagenet}, Flickr30k \citep{flickr}, and CIFAR-10 \citep{cifar}. All images are converted to grayscale and resampled using bicubic interpolation to size  $128\times 128$ for images from Yelp, ImageNet, and Flickr30k and $32\times 32$ for images from CIFAR-10. So, each can be represented by a $16384$-dimensional or $1024$-dimensional vector. The results are reported here and in Appendix \ref{extra-exp-appendix}. We consider the two versions of our fast binary embedding algorithm from Theorem~\ref{main-results}:

\textbf{Method 1.} We quantize FJLT embeddings $\rmPhi \vx$, and recover distances based on Algorithm \ref{algorithm2}.

\textbf{Method 2.} We quantize  sparse JL embeddings $\rmA\vx$ and recover distances by Algorithm \ref{algorithm2}. 

In order to test the performance of our algorithm, we  compute the mean absolute percentage error (MAPE) of reconstructed $\ell_2$ distances averaged over all pairwise data points, that is,
\[ 
\frac{2}{k(k-1)}\sum_{\vx,\vy\in\mathcal{T}}\biggl|\frac{\|\widetilde{\mV}(\vq_\vx-\vq_\vy) \|_1 -\|\vx-\vy\|_2}{\|\vx-\vy\|_2}\biggr|.
\]
\textbf{Experiments on the Yelp dataset.} To give a numerical illustration of the relation among the length  $m$ of the binary sequences, embedding dimension $p$, and order $r$, as compared to the upper bound in \eqref{error-analysis}, we use both Method 1 and Method 2 on the Yelp dataset. We randomly sample $k=1000$ images and scale them by the same constant so all data points are contained in the $\ell_2$ unit ball. The scaled dataset is denoted by $\mathcal{T}$.
Based on Theorem \ref{main-results}, we set $n=16384$ and $s=1650/n\approx 0.1$. For each fixed $p$, we apply Algorithm \ref{algorithm1} and Algorithm \ref{algorithm2} for various $m$. We present our experimental results for stable $\Sigma\Delta$ quantization schemes, given by \eqref{general-quantizer-new}, with $r=1$ and $r=2$ in Figure~\ref{fig-dist}. For $r=1$, we observe that the curve with small $p$ quickly reaches an error floor while with high $p$ the error decays like $m^{-1/2}$ and eventually reach a lower floor. The reason is that the first error term in \eqref{error-analysis} is dominant when $m/p$ is relatively small but the second error term eventually dominates as $m$ becomes larger and larger. When $r=2$ the error curves decay faster and eventually achieve the same flat error because now 
the first term in \eqref{error-analysis} has power $-3/2$ while the second flat error term is independent of $r$. Moreover, the performance of Method $2$ is very similar to that of Method $1$.

\begin{figure}[t]
    \centering \vspace{-10pt}
    \subfigure[MAPE of Method $1$ ($p=64$)]
    {
        \includegraphics[scale=0.45]{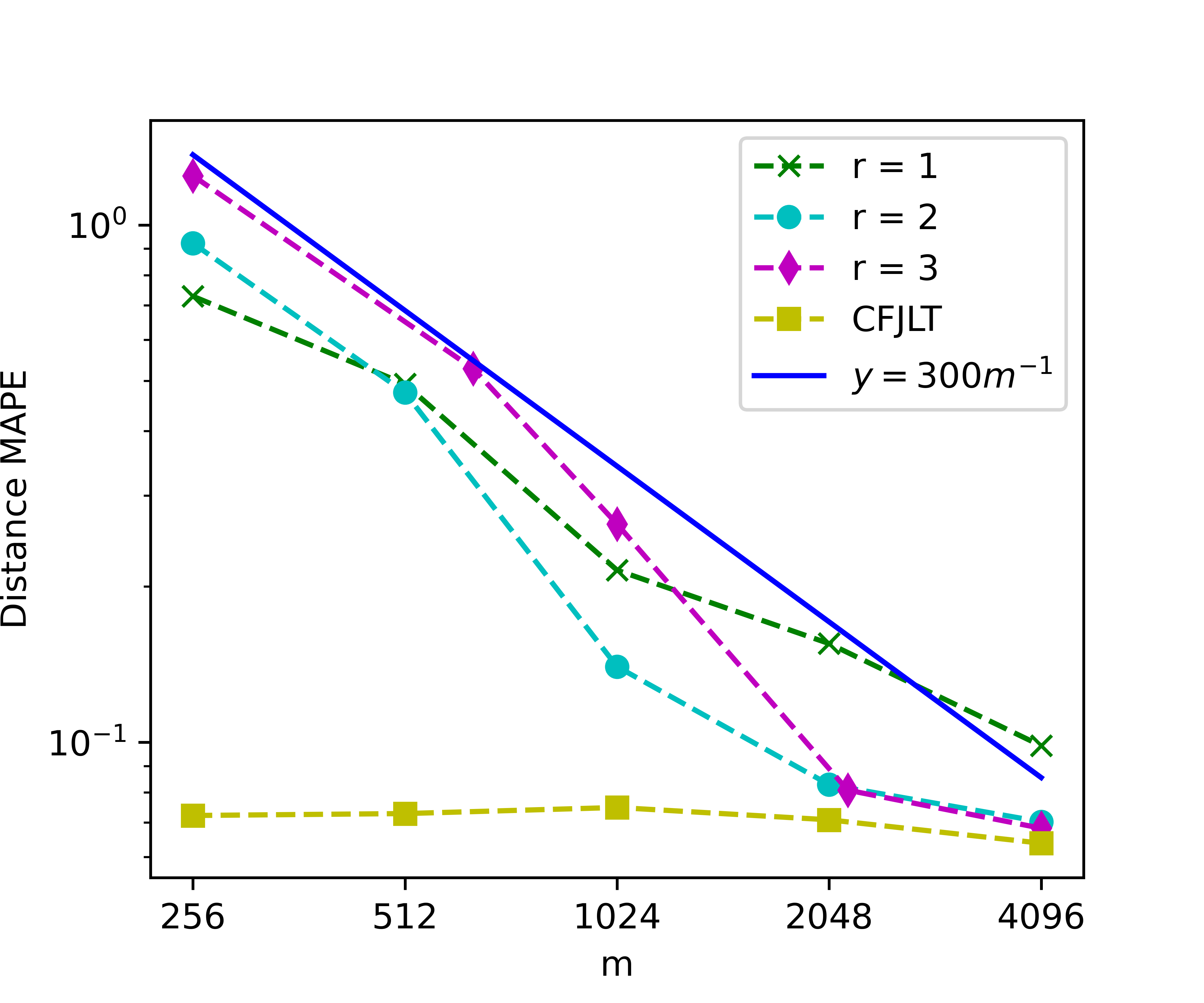}
    }
    \subfigure[MAPE of Method $2$ ($p=64$)]
    {
        \includegraphics[scale=0.45]{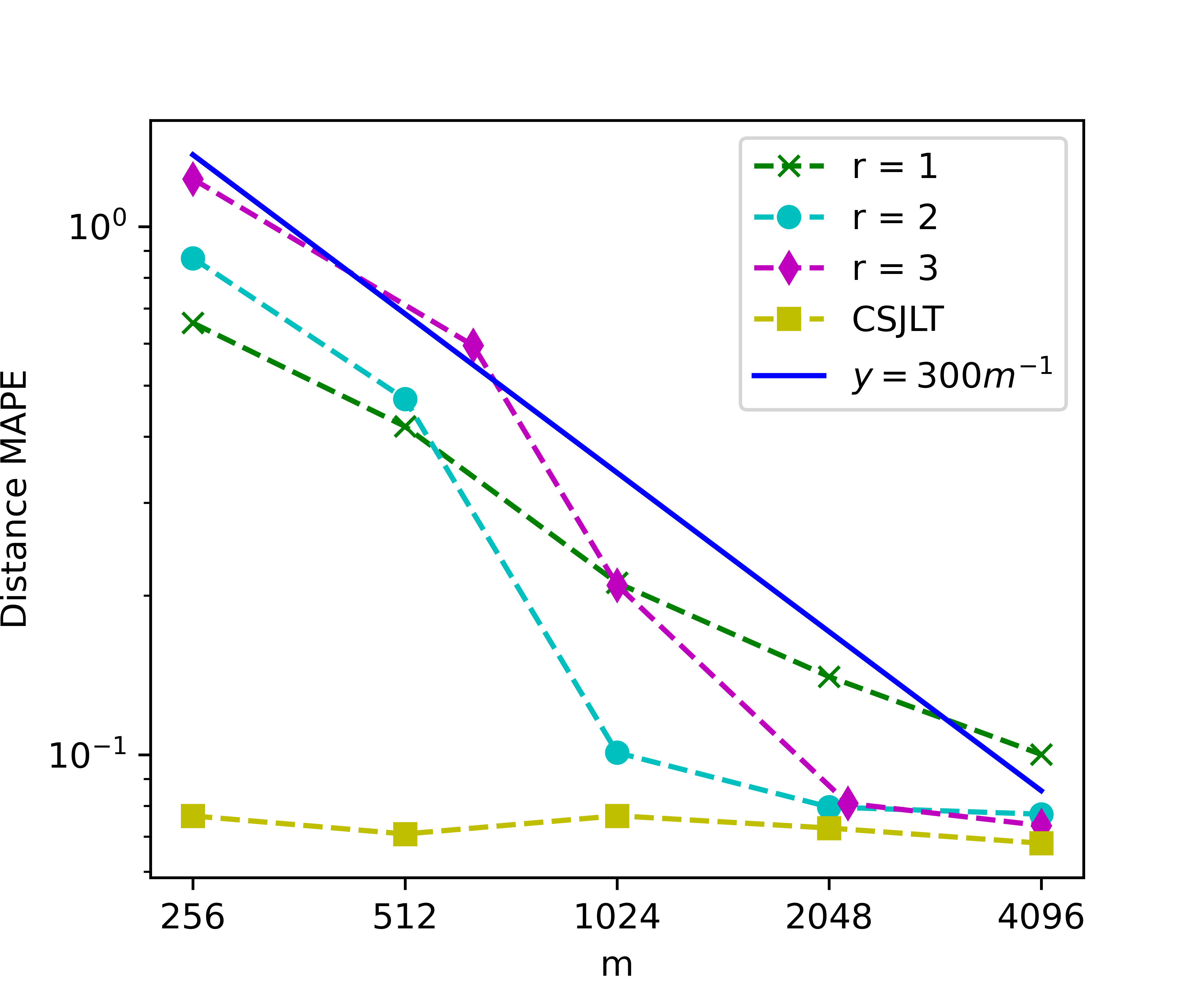}
    }\\ \vspace{-5pt}
    \subfigure[MAPE of Method $1$ ($p=p(m)$)]
    {
        \includegraphics[scale=0.45]{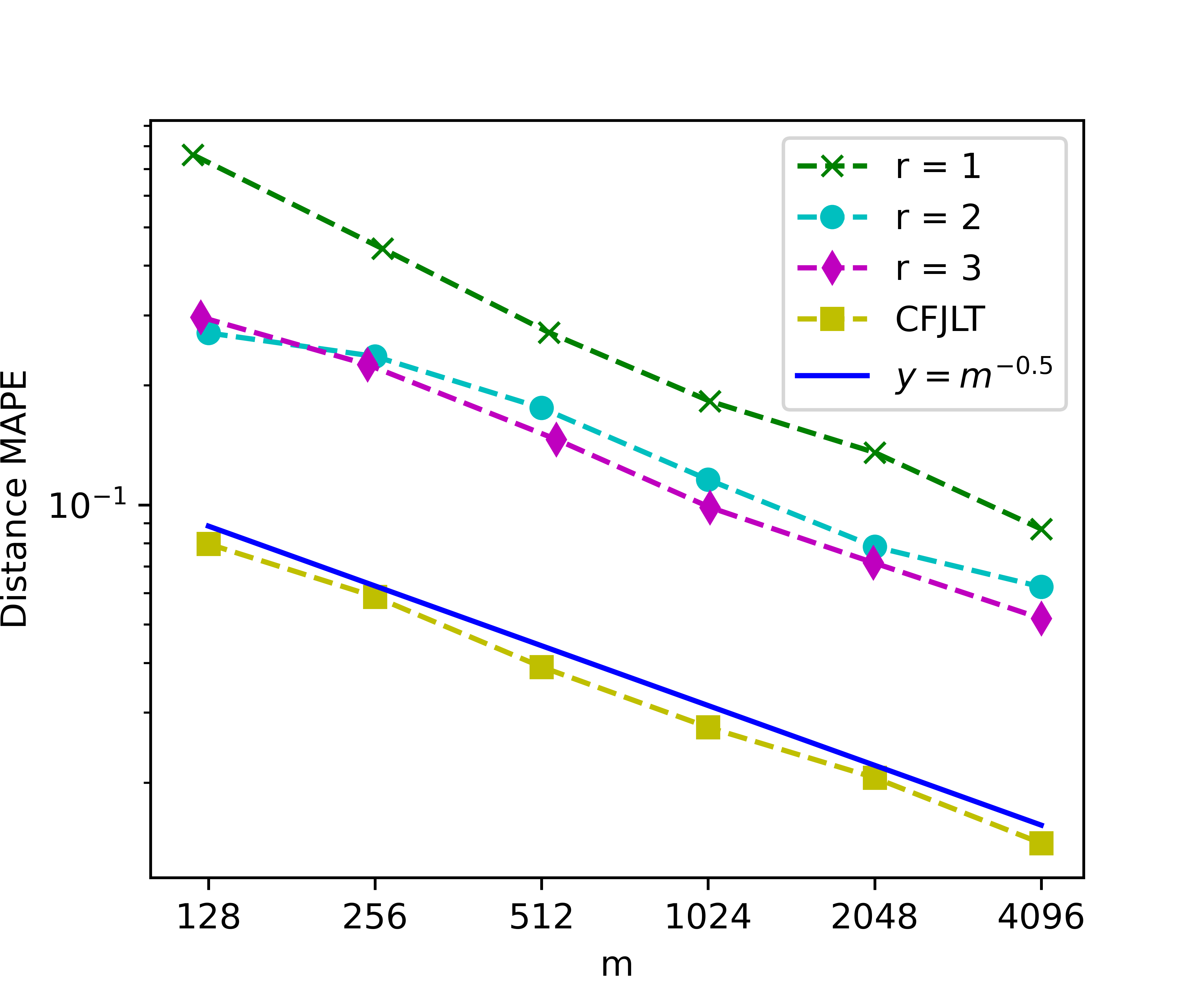}
    }
        \subfigure[MAPE of Method $2$ ($p=p(m)$)]
    {
        \includegraphics[scale=0.45]{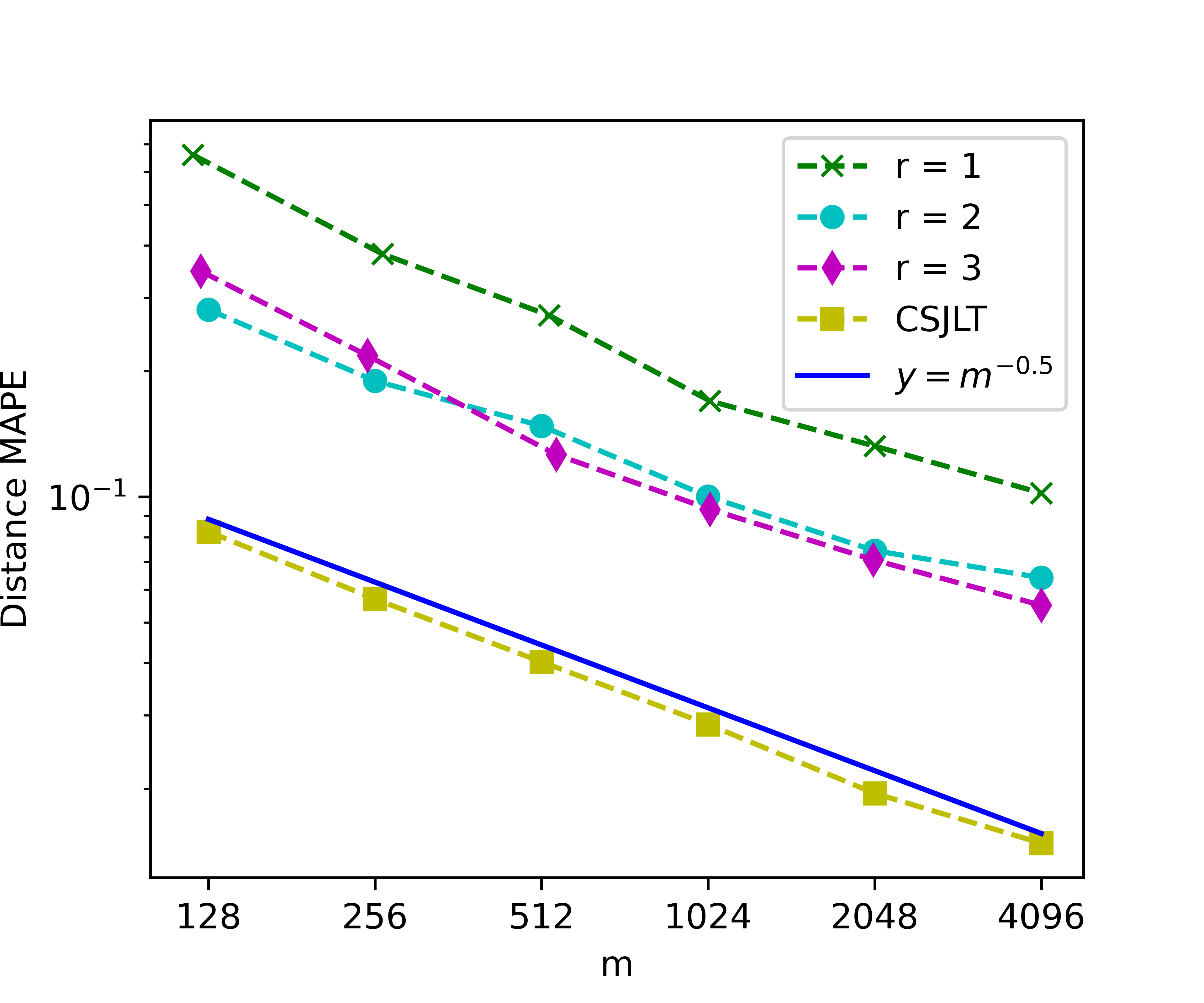}
    }
    \vspace{-10pt}
    \caption{Plots of $\ell_2$ distance reconstruction error with fixed $p=64$ and optimal $p=p(m)$\vspace{-5pt}}
    \label{fig-order}
\end{figure} 
%
Next, we illustrate the relationship between the quantization order $r$ and the number of measurements $m$ in Figure \ref{fig-order}. The curves obtained directly from an unquantized CFJLT (resp. CSJLT) as in Lemma \ref{CJLT-Corollary2}, with $m=256,512,1024,2048,4096$, and $p=64$ are used for comparison against the quantization methods. The first row of Figure \ref{fig-order} depicts the mean squared relative error when $p=64$ is fixed for all distinct methods. It shows that stable quantization schemes with order $r>1$  outperform the first order greedy quantization method, particularly when $m$ is large. Moreover, both the $r=2$ and $r=3$ curves converge to the CFJLT/CSJLT result as $m$ goes to $4096$. Note that by using a quarter of the original dimension, i.e. $m=4096$, our construction achieves less than $10\%$ error. Furthermore, if we encode $\widetilde{\mV}\vq$ as discussed in Section \ref{ComplexityDistance}, then we need at most $rp\log_2\lambda =  64r\log_2(4096/64)= 384r$ bits per image, which is $\lesssim 0.023$ bits per pixel.

For our final experiment, we illustrate that the performance of the proposed approach can be further improved. Note that the choice of $p$ only affects the distance computation in Algorithm \ref{algorithm2} and does not appear in the embedding algorithm. In other words, one can vary $p$ in Algorithm \ref{algorithm2} to improve performance. This can be done either analytically by viewing the right hand side of \eqref{error-analysis} as a function of $p$ and optimizing for $p$ (up to constants). It can also be done empirically, as we do here. Following this intuition, if we vary $p$ as a function of $m$, and use the empirically optimal $p:=p(m)$ in the construction of $\widetilde{\mV}$, then we obtain the second row of Figure \ref{fig-order} where the choice $r=3$ exhibits lower error than other quantization methods. Note that the decay rate, as a function of $m$, very closely resembles that of the unquantized JL embedding particularly for higher orders $r$ (as one can verify by optimizing the right hand side of \eqref{error-analysis}).

\subsubsection*{Acknowledgments}
Our work was supported in part by NSF Grant DMS-2012546 and a UCSD senate research award. The authors would like to thank Sjoerd Dirksen for inspiring discussions and suggestions.


\bibliography{iclr2021_conference}
\bibliographystyle{iclr2021_conference}

\appendix
\section{Comparisons on different datasets}\label{extra-exp-appendix}
\begin{figure}[ht]
    \centering 
    \includegraphics[scale=0.7]{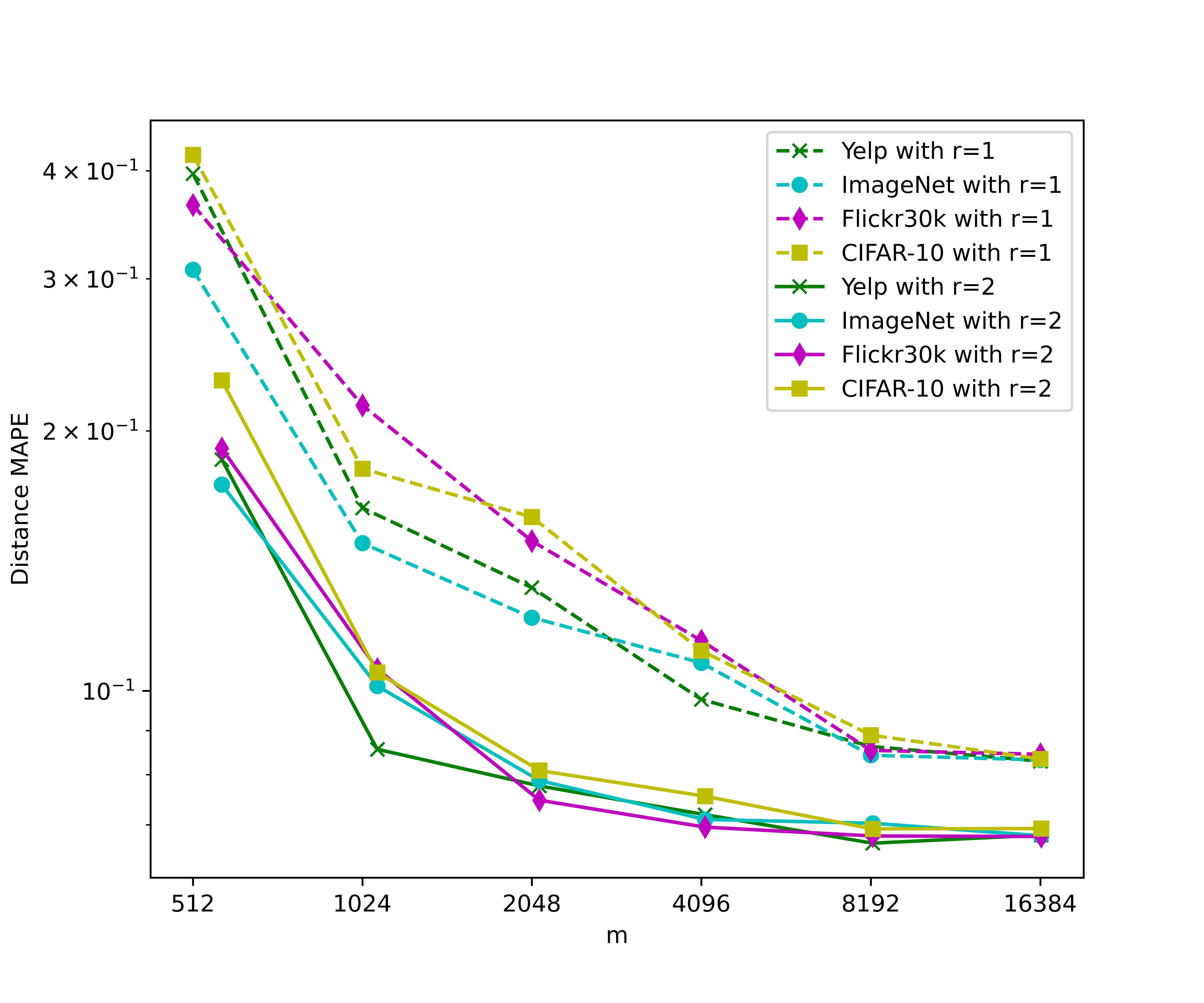}
    \caption{Plot of MAPE of Method 2 on four datasets with fixed $p=64$ and order $r=1,2$}
    \label{extra-experiments}
\end{figure} 
Experiments on the Yelp dataset in Section \ref{experiment} showed that Method 2 based on sparse JL embeddings performs as well as Method 1 which usues an FJLT to enforce the well-spreadness assumption. Now, we only focus on Method 2 and check its performance on all four different datasets: Yelp, ImageNet, Flickr30k, and CIFAR-10. 

Specifically, for each dataset we randomly sample $k=1000$ images and scale them such that all scaled data points are contained in the $\ell_2$ unit ball. Then we apply Method 2 to each dataset separately and compute the corresponding MAPE metric, see Figure~\ref{extra-experiments}, 
where we fix $p=64$ and let $r=1,2$. We can observe that curves with $r=1$ fluctuate, but displays a clear downward trend, when $m\leq 8192$ and reach an error floor around $0.08$. In contrast to the first order quantization scheme, curves with $r=2$ decays faster and eventually achieve a lower floor around $0.07$. Additionally, Method 2 performs well on all datasets and implies that assumption of well-spread input vectors is not too restrictive on natural images.

\section{Proof of Lemma~\ref{CJLT-Corollary2}}\label{appendix-CJLT}
We will require the following lemmas, adapted from the literature, to prove the distance-preserving properties of our condensed sparse Johnson-Lindenstrauss transform (CSJLT) and condensed fast Johnson-Lindenstrauss transform (CFJLT) in Lemma~\ref{CJLT-Corollary2}.
\begin{lemma}[Theorem $5.1$ in \cite{Matousek}]\label{SJLT}
Let $n\in\mathbb{N}$, $\epsilon\in(0,\frac{1}{2})$, $\delta\in (0,1),\alpha\in [\frac{1}{\sqrt{ n}},1]$ be parameters and set $m=C\epsilon^{-2}\log(\delta^{-1})\in\mathbb{N}$ where $C$ is a sufficiently large constant. Let $s=2\alpha^2/\epsilon\leq1$, $\rmA\in\mathbb{R}^{m\times n}$ be as in Definition \ref{sparse-gaussian}. Then
\begin{equation}\label{SJLT-eq}
	P\Bigl((1-\epsilon)\|\vx\|_2\leq \frac{\sqrt{\pi/2}}{m}\|\rmA\vx\|_1\leq(1+\epsilon)\|\vx\|_2\Bigr)\geq 1-\delta
\end{equation}
holds for all $\vx\in\mathbb{R}^n$ with $\|\vx\|_\infty \leq \alpha \|\vx\|_2$.
\end{lemma}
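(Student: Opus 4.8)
The plan is to fix an arbitrary $\vx$ with $\|\vx\|_\infty\le\alpha\|\vx\|_2$ and prove the two-sided bound as a concentration statement over the random draw of $\rmA$; since both $\|\rmA\vx\|_1$ and $\|\vx\|_2$ are homogeneous of degree one in $\vx$, I may assume $\|\vx\|_2=1$. Write $Y_i:=(\rmA\vx)_i=\sum_{j=1}^n \ra_{ij}x_j$, so that $Y_1,\dots,Y_m$ are i.i.d.\ (the rows of $\rmA$ are independent) and $S:=\frac{\sqrt{\pi/2}}{m}\|\rmA\vx\|_1=\frac{\sqrt{\pi/2}}{m}\sum_{i=1}^m|Y_i|$. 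The goal reduces to showing $|S-1|\le\epsilon$ with probability at least $1-\delta$, which I split into an expectation estimate $|\E S-1|\le\epsilon/4$ and a concentration estimate $P(|S-\E S|>\epsilon/2)\le\delta$.

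For the expectation I would condition on the sparsity pattern of row $i$. Writing $\ra_{ij}=b_{ij}g_{ij}$ with $b_{ij}\sim\mathrm{Bernoulli}(s)$ and $g_{ij}\sim\mathcal{N}(0,1/s)$ independent, conditionally on $\{b_{ij}\}_j$ the variable $Y_i$ is Gaussian with variance $\sigma_i^2=s^{-1}\sum_j b_{ij}x_j^2$, so $\E[\,|Y_i|\,]=\sqrt{2/\pi}\,\E[\sigma_i]$. The well-spreadness enters here: $\E[\sigma_i^2]=\|\vx\|_2^2=1$ while $\Var(\sigma_i^2)=\frac{1-s}{s}\sum_j x_j^4\le\frac{1}{s}\|\vx\|_\infty^2\|\vx\|_2^2\le\frac{\alpha^2}{s}=\frac\epsilon2$, using $s=2\alpha^2/\epsilon$. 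The elementary quadratic bound $\sqrt{w}\ge 1+\tfrac12(w-1)-\tfrac12(w-1)^2$, valid for all $w\ge 0$, together with Jensen's inequality for the matching upper bound, then gives $1-\epsilon/4\le\E[\sigma_i]\le 1$, hence $|\E S-1|\le\epsilon/4$.

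For the concentration I would bound the moment generating function of $Y_i$ and apply a Chernoff/Bernstein argument to the centered i.i.d.\ average $\frac1m\sum_i(|Y_i|-\E|Y_i|)$. Using $1-s+se^u\le e^{s(e^u-1)}$ factor-wise gives $\E[e^{tY_i}]\le\exp\!\big(s\sum_j(e^{t^2x_j^2/(2s)}-1)\big)$, and for $|t|\le 2/\sqrt\epsilon$ each exponent $t^2x_j^2/(2s)$ is at most $1$, so $e^u-1\le u+u^2$ yields $\E[e^{tY_i}]\le\exp\!\big(\tfrac{t^2}{2}+\tfrac{\epsilon t^4}{8}\big)\le e^{t^2}$. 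Thus on the relevant range of $t$, $Y_i$ is sub-gaussian with an $O(1)$ variance proxy (independent of $s$), and hence $|Y_i|$ is sub-gaussian of $O(1)$ norm. A standard sub-gaussian deviation bound for the i.i.d.\ average then gives $P(|S-\E S|>\epsilon/2)\le 2e^{-c m\epsilon^2}\le\delta$ once $m\ge C\epsilon^{-2}\log(1/\delta)$ with $C$ large; combining with the expectation estimate yields $|S-1|\le\epsilon$ with probability at least $1-\delta$.

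The main obstacle is the concentration step: one must verify that $Y_i$ is sub-gaussian with a variance proxy of \emph{constant} order rather than of order $1/s$, since a naive bound based only on $\sigma_i^2\le 1/s$ blows up as $s\to0$. This is precisely where the choice $s=2\alpha^2/\epsilon$ and the hypothesis $\|\vx\|_\infty\le\alpha$ are used, guaranteeing that the sub-gaussian regime $|t|\le 2/\sqrt\epsilon$ comfortably contains the (small) values of $t$ needed to resolve deviations at scale $\epsilon$.
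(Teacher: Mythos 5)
You should first note that the paper does not prove this lemma at all: it is imported, in adapted form, as Theorem 5.1 of \cite{Matousek}, so there is no internal proof to compare against. Judged on its own merits, your architecture --- normalize $\|\vx\|_2=1$, split $S=\frac{\sqrt{\pi/2}}{m}\|\rmA\vx\|_1$ into a mean estimate and a Chernoff-type concentration estimate based on an MGF bound valid only on the restricted range $|t|\le 2/\sqrt{\epsilon}$ --- is essentially the strategy of Matou\v{s}ek's original proof, and your MGF computation is literally the same one the paper itself performs in proving its Lemma \ref{concentration}. Your mean estimate is correct and complete: conditioning on the Bernoulli pattern gives $\E S=\E[\sigma_i]$, the bound $\Var(\sigma_i^2)\le \alpha^2/s=\epsilon/2$ uses the well-spreadness hypothesis exactly where it must, and the minorization $\sqrt{w}\ge 1+\frac{1}{2}(w-1)-\frac{1}{2}(w-1)^2$ does hold for all $w\ge 0$ (setting $t=\sqrt{w}$ it is equivalent to $t^3-3t+2=(t-1)^2(t+2)\ge 0$).

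The one step you must repair is the sentence ``hence $|Y_i|$ is sub-gaussian of $O(1)$ norm.'' That claim is false as stated: the bound $\E e^{tY_i}\le e^{t^2}$ holds only for $|t|\le 2/\sqrt{\epsilon}$, and correspondingly the tail of $|Y_i|$ degrades from $e^{-u^2/4}$ to roughly $e^{-u/\sqrt{\epsilon}}$ beyond $u\asymp 1/\sqrt{\epsilon}$, so $\||Y_i|\|_{\psi_2}$ genuinely diverges as $\epsilon\to 0$; the variable is only sub-exponential (sub-gamma) with $O(1)$ parameters. A second, related issue is that Chernoff requires the MGF of the \emph{centered} variable $|Y_i|-\E|Y_i|$, not of $Y_i$: the naive passage $\E e^{t|Y_i|}\le \E e^{tY_i}+\E e^{-tY_i}\le 2e^{t^2}$ costs a factor $2$ per coordinate and hence a fatal $2^m$ in the product over $i$. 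Both issues are resolved by invoking the correct standard tool: the restricted-range MGF bound implies $\||Y_i|\|_{\psi_1}=O(1)$, and Bernstein's inequality for sums of i.i.d. sub-exponential random variables (which handles centering internally) gives $P\bigl(|S-\E S|>\epsilon/2\bigr)\le 2\exp\bigl(-cm\min(\epsilon^2,\epsilon)\bigr)=2e^{-cm\epsilon^2}$, since the target deviation $\epsilon/2<1$ lies in the sub-gaussian regime of the Bernstein bound. With ``a standard sub-gaussian deviation bound'' replaced by this Bernstein argument, your proof is complete and correct.
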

Lemma \ref{HD} below is adapted from  \cite[Lemma 1]{Ailon}, and we present its proof for completeness.  
\begin{lemma}\label{HD}
Let $\mH\in\mathbb{R}^{n\times n}$ and $\rmD\in\mathbb{R}^{n\times n}$ be as in Definition \ref{FJLT}. For any $\lambda>0$ and $\vx\in\mathbb{R}^n$ we have 
\begin{equation}\label{HD-eq}
	P\Bigl(\|\mH\rmD\vx\|_\infty\leq \lambda\|\vx\|_2 \Bigr)\geq 1-2n e^{-n\lambda^2/2}.
\end{equation}
\end{lemma}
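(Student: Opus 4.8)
The plan is to analyze $\mH\rmD\vx$ coordinate by coordinate, recognizing that each entry is a weighted sum of independent symmetric signs, and then to control the maximum over coordinates by a union bound. Concentration for each individual coordinate comes for free from the structure of the Walsh-Hadamard matrix, so the only real work is bookkeeping.

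First I would write $\vy:=\mH\rmD\vx$ explicitly. Letting $\sigma_j$ denote the $j$-th diagonal entry of $\rmD$ (an i.i.d. Rademacher variable), the $i$-th coordinate is
\[
y_i=\sum_{j=1}^n H_{ij}\sigma_j x_j=\sum_{j=1}^n n^{-1/2}(-1)^{\langle i-1,j-1\rangle}\sigma_j x_j.
\]
For each fixed $i$ this is a Rademacher sum $\sum_j a_j^{(i)}\sigma_j$ with coefficients $a_j^{(i)}=n^{-1/2}(-1)^{\langle i-1,j-1\rangle}x_j$. The crucial observation is that because every Walsh-Hadamard entry has magnitude exactly $n^{-1/2}$, the coefficients satisfy $\sum_j (a_j^{(i)})^2=n^{-1}\sum_j x_j^2=n^{-1}\|\vx\|_2^2$, independently of $i$.

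Next I would invoke Hoeffding's inequality for Rademacher sums: for fixed $i$,
\[
P\bigl(|y_i|>t\bigr)\leq 2\exp\!\Bigl(-\tfrac{t^2}{2\sum_j (a_j^{(i)})^2}\Bigr)=2\exp\!\Bigl(-\tfrac{n t^2}{2\|\vx\|_2^2}\Bigr).
\]
Choosing $t=\lambda\|\vx\|_2$ collapses the right-hand side to $2e^{-n\lambda^2/2}$. A union bound over the $n$ coordinates then gives $P\bigl(\|\vy\|_\infty>\lambda\|\vx\|_2\bigr)\leq 2n\,e^{-n\lambda^2/2}$, and passing to the complementary event yields the claim.

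I do not anticipate a genuine obstacle here: this is a standard ``mass-flattening'' estimate underlying fast JL transforms, and the entire content is (a) recognizing that the Hadamard structure forces each coordinate to be a Rademacher sum with variance proxy exactly $n^{-1}\|\vx\|_2^2$, and (b) combining a sub-Gaussian tail bound with a union bound. The only point requiring mild care is confirming that the independence driving the concentration is supplied by the random signs in $\rmD$ rather than by the deterministic Hadamard pattern, so that the per-coordinate bound is uniform in $i$ and the union bound is legitimate.
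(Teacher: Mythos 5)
Your proof is correct and follows essentially the same route as the paper: both identify each coordinate of $\mH\rmD\vx$ as a Rademacher sum with squared coefficient mass exactly $n^{-1}\|\vx\|_2^2$, apply a sub-Gaussian tail bound, and finish with a union bound over the $n$ coordinates. The only cosmetic difference is that you cite Hoeffding's inequality off the shelf, whereas the paper reproves it inline via the moment generating function estimate $\cosh(t\sqrt{n}x_j)\leq \exp(nt^2x_j^2/2)$ and Markov's inequality.
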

\begin{proof}
Without loss of generality, we can assume $\|\vx\|_2=1$. Let $\vu=\mH\rmD\vx=(u_1,\ldots, u_n)$. Fix $i\in\{1,\ldots,n\}$. Then $u_i=\sum_{j=1}^n a_jx_j$ with 
\(
P\Bigl(a_j=\frac{1}{\sqrt{n}}\Bigr)=P\Bigl(a_j=-\frac{1}{\sqrt{n}}\Bigr)=\frac {1}{2}
\)
for all $j$. Moreover, $a_1,a_2,\ldots,a_n$ are independent and symmetric. So $u_i$ is also symmetric, that is, $u_i$ and $-u_i$ share the same distribution. For any $t\in\mathbb{R}$ we have 
\begin{align*}
	\E(e^{tnu_i}) 
	=\prod_{j=1}^n \E[\exp(tna_jx_j)]
	&=\prod_{j=1}^n \frac{\exp(t\sqrt{n}x_j)+\exp(-t\sqrt{n}x_j)}{2}\\
	&\leq \prod_{j=1}^n \exp(nt^2x_j^2/2)
	=\exp(nt^2/2).
\end{align*}
Since $u_i$ is symmetric, by Markov's inequality and the above result, we get 
\begin{align*}
	P(|u_i|\geq \lambda) 
	&=2P(e^{\lambda nu_i}\geq e^{\lambda^2n})
	\leq 2e^{-\lambda^2n}\E(e^{\lambda nu_i})
	= 2e^{-\lambda^2n/2}.
\end{align*}
Inequality \eqref{HD-eq} follows by the union bound over all $i\in\{1,\ldots,n\}$. 
\end{proof}

\begin{lemma}\label{CJLT}
Let $n,\lambda\in\mathbb{N}$, $\epsilon\in(0,\frac{1}{2})$, $\delta\in (0,1)$, $p=O(\epsilon^{-2}\log(\delta^{-1}))\in\mathbb{N}$ and $m=\lambda p$. Let $\widetilde{\mV}\in\mathbb{R}^{p\times m}$ be as in Definition~\ref{condensation}, $\rmA\in\mathbb{R}^{m\times n}$ be the sparse Gaussian matrix in Definition~\ref{sparse-gaussian} with $s=\Theta(\epsilon^{-1}n^{-1}(\|\vv\|_\infty/\|\vv\|_2)^2)\leq 1$, and $\rmPhi=\rmA\mH\rmD\in\mathbb{R}^{m\times n}$ be the FJLT in Definition \ref{FJLT} with $s=\Theta(\epsilon^{-1}n^{-1}(\|\vv\|_\infty/\|\vv\|_2)^2\log n)\leq 1$. Then for $\vx\in\mathbb{R}^n$ with $\|\vx\|_\infty=O(n^{-1/2}\|\vx\|_2)$, we have
\begin{equation}\label{CSJLT-eq}
	\mathrm{P}\Bigl((1-\epsilon)\|\vx\|_2\leq \|\widetilde{\mV}\rmA\vx\|_1\leq(1+\epsilon)\|\vx\|_2\Bigr)\geq 1-\delta,
\end{equation}
 and  for arbitrary $\vx\in \mathbb{R}^n$, we have
\begin{equation}\label{CFJLT-eq}
	\mathrm{P}\Bigl((1-\epsilon)\|
	\vx\|_2\leq \|\widetilde{\mV}\Phi \vx\|_1\leq(1+\epsilon)\|\vx\|_2\Bigr)\geq 1-\delta.
\end{equation}
\end{lemma}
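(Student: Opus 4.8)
The plan is to establish \eqref{CSJLT-eq} (the well-spread CSJLT bound) first and then deduce \eqref{CFJLT-eq} by using the Walsh--Hadamard preconditioner to reduce the arbitrary case to the well-spread one. For this reduction, observe that $\mH\rmD$ is orthogonal, so $\|\mH\rmD\vx\|_2=\|\vx\|_2$, and applying Lemma~\ref{HD} with $\lambda'\asymp n^{-1/2}\sqrt{\log(n/\delta)}$ gives $\|\mH\rmD\vx\|_\infty\le\lambda'\|\vx\|_2=\lambda'\|\mH\rmD\vx\|_2$ outside an event of probability at most $\delta/2$ over $\rmD$. On that event $\vx':=\mH\rmD\vx$ is well-spread with parameter $\alpha'\asymp n^{-1/2}\sqrt{\log n}$, and since $\rmPhi\vx=\rmA\vx'$, applying \eqref{CSJLT-eq} to the fixed vector $\vx'$ (and a union bound over the two failure events) yields \eqref{CFJLT-eq}. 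The extra $\log n$ factor in the sparsity $s$ prescribed for $\rmPhi$ is precisely what accommodates the larger spreading level $\alpha'$, versus $\alpha\asymp n^{-1/2}$ in the genuinely well-spread case.

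For \eqref{CSJLT-eq} I would first exploit the block structure $\mV=\mI_p\otimes\vv$. Writing $\bm\eta_i$ for the $i$-th row of $\mV\rmA\in\R^{p\times n}$, one has $(\mV\rmA\vx)_i=\langle\bm\eta_i,\vx\rangle=\sum_{j=1}^\lambda v_j\langle\ra_{(i-1)\lambda+j},\vx\rangle$, and the entry $(i,l)$ of $\mV\rmA$ depends only on $\{\rmA_{(i-1)\lambda+j,\,l}:1\le j\le\lambda\}$; these index sets are disjoint across $(i,l)$, so all entries of $\mV\rmA$ are mutually independent, each distributed as $\sum_{j=1}^\lambda v_j\xi_j$ with $\xi_j$ i.i.d.\ sparse Gaussians. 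In particular the coordinates $(\mV\rmA\vx)_i$ are i.i.d.\ across $i$ with $\Var\bigl((\mV\rmA\vx)_i\bigr)=\|\vv\|_2^2\|\vx\|_2^2$, and by Definition~\ref{condensation},
\[
\|\widetilde{\mV}\rmA\vx\|_1=\frac{\sqrt{\pi/2}}{p\|\vv\|_2}\sum_{i=1}^p\bigl|\langle\bm\eta_i,\vx\rangle\bigr|
\]
is a normalized average of $p$ i.i.d.\ nonnegative terms. It therefore suffices to control its mean and to concentrate it.

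The mean is handled by showing each coordinate $\langle\bm\eta_i,\vx\rangle$ is approximately $\mathcal{N}(0,\|\vv\|_2^2\|\vx\|_2^2)$, so that $\mathbb{E}\bigl|\langle\bm\eta_i,\vx\rangle\bigr|$ lies within a factor $1\pm O(\epsilon)$ of the half-normal value $\sqrt{2/\pi}\,\|\vv\|_2\|\vx\|_2$; this makes $\mathbb{E}\,\|\widetilde{\mV}\rmA\vx\|_1=(1\pm O(\epsilon))\|\vx\|_2$. The non-Gaussian correction is governed by the ratio of the per-summand scale to the standard deviation, namely $\tfrac{\|\vx\|_\infty}{\|\vx\|_2}\cdot\tfrac{\|\vv\|_\infty}{\|\vv\|_2}\cdot s^{-1/2}$, and the prescribed $s=\Theta(\epsilon^{-1}n^{-1}(\|\vv\|_\infty/\|\vv\|_2)^2)$ together with well-spreadness $\|\vx\|_\infty=O(n^{-1/2}\|\vx\|_2)$ keeps this ratio at the scale $\sqrt{\epsilon}$ that produces an $\epsilon$-distortion, exactly as in Matousek's analysis (Lemma~\ref{SJLT}). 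For concentration, the normalized summands $|\langle\bm\eta_i,\vx\rangle|/\|\vv\|_2$ are i.i.d.\ and sub-exponential, so a Bernstein inequality gives deviation $O(\epsilon)\|\vx\|_2$ with probability at least $1-\delta$ once $p=\Omega(\epsilon^{-2}\log(\delta^{-1}))$, matching the hypothesis on $p$. Combining the mean and concentration estimates gives \eqref{CSJLT-eq}.

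The main obstacle is the mean estimate, i.e.\ the approximate normality of the condensed coordinates. The difficulty is that $\mV\rmA$ is \emph{not} a sparse Gaussian matrix in the sense of Definition~\ref{sparse-gaussian}: conditioned on its support pattern $S$, an entry $\sum_j v_j\xi_j$ is Gaussian with the \emph{random} variance $s^{-1}\sum_{j\in S}v_j^2$, so Lemma~\ref{SJLT} cannot be applied to $\mV\rmA$ as a black box. One must instead redo Matousek's moment/Berry--Esseen bookkeeping for these weighted combinations, tracking how the weights $v_j$ enter; this is exactly where the factor $(\|\vv\|_\infty/\|\vv\|_2)^2$ in $s$ originates, and verifying that it keeps the distortion uniformly $O(\epsilon)$ over all well-spread $\vx$ is the crux of the argument.
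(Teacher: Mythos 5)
Your reduction of \eqref{CFJLT-eq} to the well-spread case via Lemma~\ref{HD} is exactly the paper's route, and your accounting of where the extra $\log n$ in $s$ goes is correct. The genuine gap is in the core estimate \eqref{CSJLT-eq}: you assert that Lemma~\ref{SJLT} cannot be used as a black box and that one ``must redo Matousek's moment/Berry--Esseen bookkeeping'' for the weighted sums $\sum_j v_j\xi_j$ --- and you leave that bookkeeping, which you yourself call the crux, undone. In fact no reworking of Lemma~\ref{SJLT} is needed; the idea you are missing is a one-line algebraic identity. Writing $\rva_1,\ldots,\rva_m$ for the rows of $\rmA$ and letting $\rmB\in\R^{p\times \lambda n}$ be the matrix whose $i$-th row is the concatenation $(\rva_{(i-1)\lambda+1},\ldots,\rva_{i\lambda})$, one checks directly that
\[
\mV\rmA\vx \;=\; \rmB\,(\vv^\tp\otimes\vx),
\qquad
\vv^\tp\otimes\vx=(v_1\vx^\tp,\ldots,v_\lambda\vx^\tp)^\tp\in\R^{\lambda n}.
\]
This moves the weights $v_j$ out of the random matrix and into the deterministic test vector: $\rmB$ is a plain rearrangement of the i.i.d.\ entries of $\rmA$, hence is itself a sparse Gaussian matrix in the sense of Definition~\ref{sparse-gaussian} (your objection about random conditional variances applies verbatim to any sparse Gaussian matrix and is already handled inside Lemma~\ref{SJLT}). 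Since $\|\vv^\tp\otimes\vx\|_2=\|\vv\|_2\|\vx\|_2$ and $\|\vv^\tp\otimes\vx\|_\infty=\|\vv\|_\infty\|\vx\|_\infty$, well-spreadness of $\vx$ makes $\vv^\tp\otimes\vx$ well-spread with parameter $\alpha=\Theta(n^{-1/2}\|\vv\|_\infty/\|\vv\|_2)$, so Lemma~\ref{SJLT} applies as a black box to $\rmB$ (which has $p$ rows) with $s=2\alpha^2/\epsilon=\Theta(\epsilon^{-1}n^{-1}(\|\vv\|_\infty/\|\vv\|_2)^2)$; the prefactor $\frac{\sqrt{\pi/2}}{p\|\vv\|_2}$ in $\widetilde{\mV}$ is exactly what converts Matousek's conclusion, which concentrates $\frac{\sqrt{\pi/2}}{p}\|\rmB(\vv^\tp\otimes\vx)\|_1$ around $\|\vv\|_2\|\vx\|_2$, into $(1\pm\epsilon)\|\vx\|_2$. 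The same identity with $\mH\rmD\vx$ in place of $\vx$, combined with Lemma~\ref{HD} as you propose, finishes \eqref{CFJLT-eq}.

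A secondary problem is that even your concentration step is not as routine as stated. Your Bernstein argument needs the normalized summands $|(\mV\rmA\vx)_i|/\|\vv\|_2$ to be sub-exponential with norm $O(\|\vx\|_2)$, but generic bounds give only $O(\|\vx\|_2/\sqrt{s})$: a single nonzero entry of $\rmA$ has magnitude of order $s^{-1/2}$, and here $s=\Theta(\epsilon^{-1}n^{-1}\lambda^{-1})$ is tiny, so this inflation is enormous. Recovering the $O(\|\vx\|_2)$ scale requires exploiting the well-spreadness of $\vx$ and the specific choice of $s$ inside the tail estimate itself --- which is again precisely the part of Matousek's proof you were hoping to quote rather than reprove. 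So the proposal as written is incomplete at both of its decisive steps; the reshaping identity above is the missing idea that closes both at once.
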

\begin{proof}
Recall that $\mV=\mI_p\otimes \vv$ and $\rmPhi=\rmA\mH\rmD$. Let $\vy\in \mathbb{R}^n$ and $\rmK:=\mV\rmA=(\mI_p\otimes \vv)\rmA\in\mathbb{R}^{p\times n}$. For $1\leq i\leq p$ and $1\leq j\leq n$, we have 
\[
K_{ij}=\sum_{k=1}^\lambda v_k a_{(i-1)\lambda+k,j}.
\]	
Denote the row vectors of $\rmA$ by $\rva_1,\rva_2,\ldots,\rva_m$. It follows that
\[
	(\rmK\vy)_i=\sum_{j=1}^n K_{ij}y_j=\sum_{j=1}^n\sum_{k=1}^\lambda y_j v_k a_{(i-1)\lambda+k,j}=\sum_{k=1}^\lambda v_k\langle \vy, \rva_{(i-1)\lambda+k}\rangle=[\rmB(\vv^\tp\otimes \vy)]_i
\]
where 
\[
\rmB:=\begin{bmatrix}
\rva_1 & \rva_2 &\ldots & \rva_\lambda \\
\rva_{\lambda+1} & \rva_{\lambda+2} &\ldots &\rva_{2\lambda}\\
\vdots &\vdots & &\vdots\\
\rva_{(p-1)\lambda+1} & \rva_{(p-1)\lambda+2} &\ldots &\rva_{p\lambda}
\end{bmatrix}\in\mathbb{R}^{p\times \lambda n}
\quad \text{and} \quad \vv^\tp\otimes \vy=
\begin{bmatrix}
	v_1 \vy \\
	v_2 \vy\\
	\vdots \\
	v_\lambda \vy
\end{bmatrix}\in\mathbb{R}^{\lambda n}.
\]
Hence $\mV\rmA\vy=\rmK\vy=\rmB(\vv^\tp\otimes \vy)$ holds for all $\vy\in\mathbb{R}^n$. Additionally, we get a reshaped sparse Gaussian random matrix $\rmB$ by rearranging the rows of $\rmA$. 

For the first assertion in the theorem, note that $\vx\in\mathbb{R}^n$ satisfies $\|\vx\|_\infty=O(\|\vx\|_2/\sqrt{n})$. So, we have $\mV\rmA\vx=\rmB(\vv^\tp\otimes \vx)$, $\|\vv^\tp\otimes \vx\|_2=\|\vv\|_2\|\vx\|_2$ and $\|\vv^\tp\otimes \vx\|_\infty=\|\vv\|_\infty\|\vx\|_\infty$. Then \eqref{CSJLT-eq} holds by applying Lemma \ref{SJLT} to random matrix $\rmB$ and vector $\vv^\top\otimes \vx$ with $\alpha=\Theta(n^{-1/2}\|\vv\|_\infty/\|\vv\|_2)$.

For the second assertion, if $\vx\in\mathbb{R}^n$ is arbitrary, then by substituting $\mH\rmD\vx$ for $\vy$ one can get 
\[
\mV\rmPhi \vx=\rmB(\vv^\tp\otimes (\mH\rmD\vx)).
\]
Note that
\[
\|\vv^\tp\otimes (\mH\rmD\vx)\|_2=\|\vv\|_2\|\mH\rmD\vx\|_2=\|\vv\|_2\|\vx\|_2\quad\text{and}\quad \|\vv^\tp\otimes(\mH\rmD\vx)\|_\infty=\|\vv\|_\infty\|\mH\rmD\vx\|_\infty.
\]
Inequality \eqref{CFJLT-eq} follows immediately by using the above fact and applying Lemma \ref{SJLT} and Lemma \ref{HD} to the random operator $\rmB$ and vector $\vv^\tp \otimes (\mH\rmD\vx)$ with $\alpha=\Theta((n^{-1}\log n)^{1/2}\|\vv\|_\infty/\|\vv\|_2)$.
\end{proof}
Now we can embed a set of points in a high dimensional space into a space of much lower dimension in such a way that distances between the points are nearly preserved. By substituting $\delta$ with $2\delta/|\mathcal{T}|^2$ in Lemma~\ref{CJLT} and using the fact
\(
1-\binom{|\mathcal{T}|}{2} \frac{2\delta}{|\mathcal{T}|^2}= 1-\frac{|\mathcal{T}|(|\mathcal{T}|-1)}{2}\cdot\frac{2\delta}{|\mathcal{T}|^2}>1-\delta,
\)
Lemma~\ref{CJLT-Corollary2} follows from the union bound over all pairwise data points in $\mathcal{T}$.

\section{Stable Sigma-Delta quantization and its properties}\label{appendix-quantization}
Although it is a non-trivial task to design a stable quantization rule $\rho$ when $r>1$, families of one-bit $\Sigma\Delta$ quantization schemes that achieve this goal have been designed by \cite{Daubechies,Gunturk,Deift}, and we now describe one such family. To start, note that an $r$-th order $\Sigma\Delta$ quantization scheme may also arise from a more general difference equation of the form
\begin{equation}\label{stable-quantizer}
\ry-\rq=\vf* \vv
\end{equation}
where $*$ denotes convolution and the sequence $\vf=\mP^r \vg$ with $\vg\in\ell^1$. Then any (bounded) solution $\vv$ of \eqref{stable-quantizer} generates a (bounded) solution $\vu$ of \eqref{difference-eq} via $\vu=\vg*\vv$. Thus \eqref{difference-eq} can be rewritten in the form \eqref{stable-quantizer} by a change of variables. Defining  $\vh:=\bm{\delta}^{(0)}-\vf$, where $\bm{\delta}^{(0)}$ denotes the Kronecker delta sequence supported at $0$, and choosing the quantization rule $\rho$ in terms of the new variable as $(\vh*\vv)_i+y_i$. Then \eqref{general-quantizer} reads as
\begin{equation}\label{general-quantizer-new}
\begin{cases}
q_i=Q((\vh*\vv)_i+y_i), \\
v_i=(\vh*\vv)_i+y_i-q_i.
\end{cases}
\end{equation}
By designing a proper filter $\vh$ one can get a stable $r$-th order $\Sigma\Delta$ quantizer, as was done in \cite{Deift, Gunturk}, leading to the following result from \cite{Gunturk}, which exploits the above relationship between $\vv$ and $\vu$ to bound $\|\vu\|_\infty$.
\begin{proposition}
Fix an integer $r$, an integer $\sigma\geq 6$ and let $n_j=\sigma(j-1)^2+1$ for $j=1,2,\ldots,r$. Let the filter $h$ be of the form
\[
\vh =\sum_{j=1}^r d_j\bm{\delta}^{n_j}
\]
where $\bm{\delta}^{n_j}$ is the Kronecker delta supported at $n_j$ and $d_j=\prod_{i\neq j}\frac{n_i}{n_i-n_j}$ for $j=1,2,\ldots,r$. There exists a universal constant $C>0$ such that the $r$th order $\Sigma\Delta$ scheme \eqref{general-quantizer-new} with $1$-bit alphabet $\mathcal{A}=\{-1,1\}$, is stable, and 
	\begin{equation}\label{r-stability-eq1}
	\|\vy\|_\infty \leq \mu <1 \Longrightarrow \|\vu\|_\infty  \leq Cc(\mu)^rr^r,
	\end{equation}
where $c(\mu)>0$ is a constant only depends on $\mu$.
\end{proposition}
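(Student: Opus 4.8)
The plan is to analyze the scheme through its convolutional reformulation \eqref{stable-quantizer}, reduce the stability claim to two $\ell^1$-estimates on the filters $\vh$ and $\vg$, and then transfer the resulting bound on the auxiliary state $\vv$ to $\vu$ via the identity $\vu=\vg*\vv$. First I would establish the algebraic backbone: the coefficients $d_j=\prod_{i\neq j}\frac{n_i}{n_i-n_j}$ are exactly the Lagrange interpolation weights at the nodes $n_1,\ldots,n_r$ evaluated at $0$. Interpolating the monomials $x^k$ for $0\le k\le r-1$ yields the discrete moment identities $\sum_j d_j n_j^k$ equal to $1$ when $k=0$ and $0$ otherwise. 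Passing to the symbol $\widehat{\vh}(\xi)=\sum_j d_j e^{-in_j\xi}$, these identities say that $\widehat{\vh}$ agrees with the constant $1$ to order $r-1$ at $\xi=0$, so $\widehat{\vf}(\xi)=1-\widehat{\vh}(\xi)$ has a zero of order $r$ there. Dividing by $(1-e^{-i\xi})^r$ then exhibits $\vf=\mP^r\vg$ with $\vg\in\ell^1$, which justifies \eqref{general-quantizer-new} and the relation $\vu=\vg*\vv$.

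Second, I would prove that the state $\vv$ stays bounded. Since every node satisfies $n_j\ge 1$, the rule $q_i=Q((\vh*\vv)_i+y_i)$ is causal, so $(\vh*\vv)_i=\sum_j d_j v_{i-n_j}$ depends only on past values $v_{i'}$ with $i'<i$. Writing $x_i=(\vh*\vv)_i+y_i$ and using that the one-bit quantizer gives $v_i=x_i-\sign(x_i)$, an induction shows that if $|v_{i'}|\le\gamma$ for all $i'<i$, then $|x_i|\le\|\vh\|_1\gamma+\mu$ and hence $|v_i|\le\max(1,\,\|\vh\|_1\gamma+\mu-1)$. This closes with a finite $\gamma=\gamma(\mu)$ precisely when $\|\vh\|_1=\sum_j|d_j|$ is bounded below $2$, more exactly when $\|\vh\|_1\le 2-\mu$, in which case one may take $\gamma=\tfrac{1-\mu}{\|\vh\|_1-1}$. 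Establishing this uniform-in-$r$ bound on $\sum_j|d_j|$ is exactly where the quadratic spacing $n_j=\sigma(j-1)^2+1$ and the condition $\sigma\ge 6$ enter: with $|n_i-n_j|=\sigma\,|i-j|\,(i+j-2)$, the numerators $n_i$ and these gaps balance so that the products defining $d_j$ decay rapidly away from $j=1$ and their absolute sum converges as $r\to\infty$; taking $\sigma=\sigma(\mu)\ge 6$ large enough forces $\sum_j|d_j|\le 2-\mu$ for every $\mu<1$.

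Finally, I would transfer the bound to $\vu$ via $\|\vu\|_\infty=\|\vg*\vv\|_\infty\le\|\vg\|_1\,\|\vv\|_\infty$, estimating $\|\vg\|_1$ through the symbol $\widehat{\vg}=(1-\widehat{\vh})/(1-e^{-i\xi})^r$. The $r$-fold division by the order-$r$ zero, combined with the magnitudes of the interpolation weights and the span of the nodes (which grows like $\sigma r^2$), produces the factorial-type growth $r^r$ while the dependence of $\sigma$ on $\mu$ supplies the factor $c(\mu)^r$. Putting $\|\vv\|_\infty\le\gamma(\mu)$ together with $\|\vg\|_1\lesssim c(\mu)^r r^r$ yields the claimed bound $\|\vu\|_\infty\le Cc(\mu)^r r^r$.

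I expect the main obstacle to be the combinatorial and analytic control of the filter norms, namely bounding $\sum_j|d_j|$ uniformly in $r$ and estimating $\|\vg\|_1\lesssim c(\mu)^r r^r$. Both hinge on the cancellation engineered by the quadratic node placement, and it is here that the hypothesis $\sigma\ge 6$ is indispensable; by contrast, the induction of the second step and the interpolation algebra of the first are routine once these two estimates are in hand.
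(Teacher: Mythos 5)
The paper itself does not prove this proposition: it is imported from the literature (the surrounding text attributes the filter design to \cite{Deift,Gunturk} and calls the bound ``the following result from'' \cite{Gunturk}), so the comparison here is against that cited argument. Your outline reconstructs it faithfully: the Lagrange-weight/vanishing-moment identities giving $\bm{\delta}^{(0)}-\vh=\mP^r\vg$ with $\vg$ finitely supported, the greedy induction showing that $\|\vh\|_1\leq 2-\mu$ keeps the auxiliary state bounded (your invariant $\gamma=(1-\mu)/(\|\vh\|_1-1)$ closes the induction, as does the standard choice $\gamma=1$), and the transfer $\|\vu\|_\infty\leq\|\vg\|_1\|\vv\|_\infty$. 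Your reading of the role of $\sigma$ is also the correct one: as literally stated (``fix an integer $\sigma\geq 6$''), the proposition is loose, because for a fixed $\sigma$ the greedy criterion only covers $\mu\leq 2-\|\vh\|_1$, and $\|\vh\|_1>1$ strictly for $r\geq 2$ (the $d_j$ alternate in sign while summing to $1$); covering every $\mu<1$ requires $\sigma\gtrsim(1-\mu)^{-1}$, which is exactly how $c(\mu)$ acquires its $\mu$-dependence in \cite{Gunturk,Deift}.

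The one place where your sketch stops short of a proof is the estimate $\|\vg\|_1\lesssim c^r r^r$, and it is worth being precise about why, since the naive version of your ``$r$-fold division'' step does not give the stated rate: dividing by $(1-z)$ is partial summation, and doing it $r$ times over a support of length $\sigma r^2$ only yields $\|\vg\|_1\lesssim\sigma^r r^{2r}$, which would degrade the final error from exponential- to root-exponential-type when optimized in $r$. The missing factor $1/r!$ comes from the interpolation-error (divided-difference) identity $1-\widehat{\vh}(\xi)=\pm\,\phi[0,n_1,\ldots,n_r]\prod_{j}n_j$ with $\phi(x)=e^{-i\xi x}$, which gives $|1-\widehat{\vh}(\xi)|\leq\frac{|\xi|^r}{r!}\prod_j n_j$; dividing by $|1-e^{-i\xi}|^r\geq(2|\xi|/\pi)^r$ yields $\|\widehat{\vg}\|_\infty\leq(\pi/2)^r\prod_j n_j/r!$, and since $\prod_j n_j\leq(2\sigma)^{r-1}((r-1)!)^2$, Parseval plus Cauchy--Schwarz over the length-$O(\sigma r^2)$ support gives $\|\vg\|_1\lesssim(\pi\sigma/e)^r r^r$ up to factors absorbable into $C$ and $c^r$. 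With $\sigma=\Theta((1-\mu)^{-1})$ this is the claimed $Cc(\mu)^r r^r$. So your architecture is the right one---and the same as the cited proof's---but the ``factorial-type growth'' you invoke must be extracted from this divided-difference cancellation, not from the division and node span alone.
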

Having introduced stable $\Sigma\Delta$ quantization, we now present a lemma controlling an operator norm of $\widetilde{\mV}\mP^r$. We will need this result in controlling the error in approximating distances associated with our binary  embedding.
\begin{lemma}\label{inf-1 norm}
For a stable $r$-th order $\Sigma\Delta$ quantization scheme,
	\[
	\|\widetilde{\mV}\mP^r \|_{\infty,1}\leq  \sqrt{\pi/2}(8r)^{r+1}\lambda^{-r+1/2}.
	\]
\end{lemma}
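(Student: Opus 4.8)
The plan is to show that the apparently dense matrix $\mV\mP^r$ is in fact extremely sparse — at most $r+1$ nonzeros per row — and then to read off the $(\infty,1)$ operator norm from this structure together with a lower bound on $\|\vv\|_2$. The decisive ingredient is an algebraic identity that makes the $r$-th differences telescope. Recall from Definition~\ref{condensation} that the entries of $\vv$ are the coefficients of $V(z)=(1+z+\ldots+z^{\widetilde{\lambda}-1})^r$. Writing $\mP=\mI-\mS$, where $\mS$ is the down-shift matrix, right-multiplication by $\mP^r$ corresponds to multiplying $V$ by $(1-z)^r$, and
\[
(1-z)^r V(z)=\bigl((1-z)(1+z+\ldots+z^{\widetilde{\lambda}-1})\bigr)^r=(1-z^{\widetilde{\lambda}})^r=\sum_{j=0}^r(-1)^j\binom{r}{j}z^{j\widetilde{\lambda}}.
\]

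I would make this precise at the matrix level. Treating row $i$ of $\mV$ as a shifted, zero-padded copy $\vv_{\mathrm{pad}}$ of $\vv$, one computes $(\vv_{\mathrm{pad}}\mP^r)_\ell=\sum_{t=0}^r(-1)^t\binom{r}{t}(\vv_{\mathrm{pad}})_{\ell+t}$. The identity above shows this quantity vanishes except at the $r+1$ positions spaced $\widetilde{\lambda}$ apart, where it equals $\pm\binom{r}{j}$. Since consecutive blocks of $\mV$ are translates of one another, every row of $\mV\mP^r$ has this same profile, so each row carries at most $r+1$ nonzero entries and has $\ell_1$ norm at most $\sum_{j=0}^r\binom{r}{j}=2^r$. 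The only boundary effect is that the first row may drop its leading term, which only decreases the norm.

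Passing to the operator norm is then routine. Using the elementary estimate $\|\mK\|_{\infty,1}\le\sum_{i,j}|K_{ij}|$ and the fact that $\mV\mP^r$ has $p$ rows of $\ell_1$ norm at most $2^r$, I get $\|\mV\mP^r\|_{\infty,1}\le p\,2^r$. Dividing by the normalization factor $p\|\vv\|_2/\sqrt{\pi/2}$ gives $\|\widetilde{\mV}\mP^r\|_{\infty,1}\le\sqrt{\pi/2}\,2^r/\|\vv\|_2$. It remains to lower-bound $\|\vv\|_2$. The coefficients $v_s$ are nonnegative with $\sum_s v_s=V(1)=\widetilde{\lambda}^{\,r}$, and $\vv$ has exactly $\lambda$ entries, so Cauchy--Schwarz yields $\|\vv\|_2\ge\|\vv\|_1/\sqrt{\lambda}=\widetilde{\lambda}^{\,r}/\sqrt{\lambda}$. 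Finally $\lambda=r\widetilde{\lambda}-r+1$ gives $\widetilde{\lambda}\ge\lambda/r$, hence $\widetilde{\lambda}^{\,r}\ge(\lambda/r)^r$, and assembling everything,
\[
\|\widetilde{\mV}\mP^r\|_{\infty,1}\le\sqrt{\pi/2}\,\frac{2^r\sqrt{\lambda}}{\widetilde{\lambda}^{\,r}}\le\sqrt{\pi/2}\,(2r)^r\lambda^{-r+1/2}\le\sqrt{\pi/2}\,(8r)^{r+1}\lambda^{-r+1/2},
\]
which is the claim, in fact with room to spare in the constant.

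The main obstacle is the first step: correctly translating right-multiplication by the banded lower-triangular $\mP^r$ into the polynomial identity, and verifying that the \emph{correlation} indexing $(\ell+t)$ (as opposed to convolution) together with the finite-window boundary terms does not destroy the cancellation that produces sparsity. Once the sparsity of $\mV\mP^r$ is established, the operator-norm bound and the estimate on $\|\vv\|_2$ are both elementary.
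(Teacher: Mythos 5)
Your proof is correct, and it is worth comparing to the paper's, which is much terser: the paper simply invokes the method of Lemma 4.6 of \citet{Saab} to import the two ingredients $\|\mV\mP^r\|_{\infty,\infty}\le r2^{3r-1}$ and $\|\vv\|_2\ge \lambda^{r-1/2}r^{-r}$, and then assembles them via $\|\mV\mP^r\|_{\infty,1}\le p\,\|\mV\mP^r\|_{\infty,\infty}$ and the normalization $\widetilde{\mV}=\frac{\sqrt{\pi/2}}{p\|\vv\|_2}\mV$. Your assembly is the same skeleton (a per-row $\ell_1$ bound multiplied by $p$, cancelled against the $1/p$ in the normalization, plus a lower bound on $\|\vv\|_2$), but you prove both ingredients from scratch rather than citing them, and your route through the factorization $(1-z)^rV(z)=(1-z^{\widetilde{\lambda}})^r$ is genuinely informative: it shows $\mV\mP^r$ has at most $r+1$ nonzeros per row, located $\widetilde{\lambda}$ apart with values $\pm\binom{r}{j}$, so each row has $\ell_1$ norm at most $2^r$ — a sharper bound than the imported $r2^{3r-1}$ — and consequently you end with the constant $(2r)^r$ in place of $(8r)^{r+1}$. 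Your treatment of the delicate points checks out: the correlation indexing $(\vv_{\mathrm{pad}}\mP^r)_\ell=\sum_{t=0}^r(-1)^t\binom{r}{t}(\vv_{\mathrm{pad}})_{\ell+t}$ only reflects the polynomial identity up to a sign $(-1)^r$ and an index shift by $r$, the left-boundary truncation in the first row can only decrease row sums, and on the right edge the last nonzero of row $i$ sits exactly at column $i\lambda\le m$, so nothing is lost there; likewise your $\|\vv\|_2\ge\|\vv\|_1/\sqrt{\lambda}=\widetilde{\lambda}^r/\sqrt{\lambda}\ge\lambda^{r-1/2}r^{-r}$ reproduces exactly the bound the paper cites. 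In short, the paper's proof buys brevity by deferring to prior work; yours buys self-containedness, an explicit description of the sparsity structure of $\mV\mP^r$ (which also explains, as a side benefit, why applying $\widetilde{\mV}$ is cheap), and a strictly better constant.
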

\begin{proof}
By the same method used in the proof of Lemma 4.6 in \cite{Saab}, one can get 
\[
\|\mV\mP^r\|_{\infty,\infty} \leq r2^{3r-1} \quad\text{and}\quad \|\vv\|_2\geq \lambda^{r-1/2}r^{-r}.
\]
It follows that 
\[
\|\widetilde{\mV}\mP^r \|_{\infty,1}=\frac{\sqrt{\pi/2}}{p\|\vv\|_2}\|\mV\mP^r \|_{\infty,1}\leq \frac{\sqrt{\pi/2}}{ \|\vv\|_2}\|\mV\mP^r\|_{\infty,\infty} \leq \sqrt{\pi/2}(8r)^{r+1}\lambda^{-r+1/2}. 
\]
\end{proof}

The following result guarantees that the linear part of our embedding generates a bounded vector, and therefore allows us to later appeal to the stability property of $\Sigma\Delta$ quantizers. In other words, it will allow us to use  \eqref{r-stability-eq1} 
to control the infinity norm of state vectors generated by $\Sigma\Delta$ quantization.
\begin{lemma}[Concentration inequality for $\|\cdot\|_\infty$]\label{concentration}
 Let $\beta>0$, $\epsilon\in(0,1)$, $\rmA\in\mathbb{R}^{m\times n}$ be the sparse Gaussian matrix in Definition~\ref{sparse-gaussian} with $s=\Theta(\epsilon^{-1}n^{-1})\leq 1$, and $\rmPhi=\rmA\mH\rmD\in\mathbb{R}^{m\times n}$ be the FJLT in Definition \ref{FJLT} with $s=\Theta(\epsilon^{-1}n^{-1}\log n)\leq 1$. Suppose that \begin{equation}\label{concentration-eq1}
	2\sqrt{\beta+\log(2m)}\leq \mu \leq \frac{4}{\sqrt{\epsilon}}.
\end{equation}
Then
\begin{equation}\label{concentration-eq6}
	P\bigl(\|\rmA \vx\|_\infty\leq  \mu \|\vx\|_2\bigr)  \geq 1-e^{-\beta}
\end{equation}
holds for $\vx\in\mathbb{R}^n$ with $\|\vx\|_\infty=O(n^{-1/2}\|\vx\|_2)$ and 
\begin{equation}\label{concentration-eq2}
	P\bigl(\|\rmPhi \vx\|_\infty\leq  \mu \|\vx\|_2\bigr)  \geq 1-2e^{-\beta}
\end{equation}
holds for $\vx\in\mathbb{R}^n$.
\end{lemma}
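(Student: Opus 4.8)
The plan is to prove \eqref{concentration-eq6} by a moment-generating-function (Chernoff) bound on each coordinate of $\rmA\vx$ followed by a union bound over the $m$ rows, and then to reduce \eqref{concentration-eq2} to \eqref{concentration-eq6} by conditioning on the Hadamard preprocessing and invoking Lemma~\ref{HD}. Throughout I normalize so that $\|\vx\|_2=1$. First I would fix a row index $i$ and study $Z:=(\rmA\vx)_i=\sum_{j=1}^n \ra_{ij}x_j$. Writing each entry $\ra_{ij}$ as the product of a Bernoulli$(s)$ mask and an independent $\gN(0,1/s)$ variable, and using independence across $j$, a direct computation gives
\[
\E e^{tZ}=\prod_{j=1}^n\Bigl(1+s\bigl(e^{t^2x_j^2/(2s)}-1\bigr)\Bigr).
\]

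Setting $a_j:=t^2x_j^2/(2s)$ and $a_0:=\max_j a_j$, convexity of $u\mapsto e^u-1$ on $[0,a_0]$ gives $e^{a_j}-1\le \gamma a_j$ with $\gamma:=(e^{a_0}-1)/a_0\ge 1$; combining this with $1+u\le e^u$ and $\sum_j a_j=t^2/(2s)$ yields the clean estimate $\E e^{tZ}\le e^{\gamma t^2/2}$. The well-spread hypothesis $\|\vx\|_\infty=O(n^{-1/2})$ together with $s=\Theta(\epsilon^{-1}n^{-1})$ forces $a_0=O(t^2\epsilon)$, so restricting to $t\le t_{\max}=\Theta(\epsilon^{-1/2})$ keeps $a_0\le 1$, hence $\gamma\le e-1<2$, giving the sub-Gaussian bound $\E e^{tZ}\le e^{t^2}$ on that range.

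I would then apply Chernoff's inequality with $t=\mu/2$. The upper constraint $\mu\le 4/\sqrt\epsilon$ in \eqref{concentration-eq1} is exactly what guarantees $\mu/2\le t_{\max}$ once the hidden constant in $s$ is chosen large enough, so the sub-Gaussian bound applies and yields $P(Z\ge\mu)\le e^{-\mu^2/4}$; since the entries of $\rmA$ are symmetric, the same bound holds for $-Z$, giving $P(|Z|\ge\mu)\le 2e^{-\mu^2/4}$. A union bound over the $m$ coordinates produces $P(\|\rmA\vx\|_\infty\ge\mu)\le 2m\,e^{-\mu^2/4}$, and the lower constraint $\mu\ge 2\sqrt{\beta+\log(2m)}$ makes this at most $e^{-\beta}$, establishing \eqref{concentration-eq6}. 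For \eqref{concentration-eq2} I would condition on $\rmD$ and set $\vy:=\mH\rmD\vx$, noting $\|\vy\|_2=\|\vx\|_2$ since $\mH\rmD$ is orthogonal. By Lemma~\ref{HD} with $\lambda=\Theta(\sqrt{\log n}\,n^{-1/2})$, the event $E=\{\|\vy\|_\infty\le\lambda\|\vx\|_2\}$ has probability at least $1-e^{-\beta}$; on $E$ the vector $\vy$ is well-spread up to a $\sqrt{\log n}$ factor, and the extra $\log n$ in $s=\Theta(\epsilon^{-1}n^{-1}\log n)$ exactly compensates, so that $\max_j t^2y_j^2/(2s)=O(t^2\epsilon)$ just as before. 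Applying the per-coordinate argument to $\rmA\vy$ over the independent randomness of $\rmA$ bounds $P(\|\rmA\vy\|_\infty\ge\mu\mid E)$ by $e^{-\beta}$, and combining with $P(E^c)\le e^{-\beta}$ gives the factor $2e^{-\beta}$.

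The main obstacle is the moment-generating-function estimate for the sparse Gaussian: one must verify that the effective sub-Gaussian variance of each coordinate stays $O(1)$ uniformly, which is precisely where the well-spread assumption, the scaling $s=\Theta(\epsilon^{-1}n^{-1})$, and the admissible range $\mu\le 4/\sqrt\epsilon$ all enter. I expect the remaining pieces—the Chernoff optimization, the union bound, and the Hadamard reduction via Lemma~\ref{HD}—to go through routinely once the per-coordinate bound $\E e^{tZ}\le e^{t^2}$ is secured on the correct range of $t$.
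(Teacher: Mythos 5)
Your proposal is correct and follows essentially the same route as the paper's proof: a per-coordinate MGF/Chernoff bound for the sparse Gaussian (the paper uses $e^x\le 1+2x$ on $[0,1]$ where you use the chord bound from convexity, but both yield $\E e^{tZ}\le e^{t^2}$ on the same range $t\lesssim \epsilon^{-1/2}$), the choice $t=\mu/2$ exploiting both sides of \eqref{concentration-eq1}, a union bound over the $m$ rows, and the reduction of the $\rmPhi$ case to the $\rmA$ case by conditioning on the event $\|\mH\rmD\vx\|_\infty\le\lambda$ from Lemma~\ref{HD} with the extra $\log n$ in $s$ absorbing the loss of well-spreadness.
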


\begin{proof}
Without loss of generality, we can assume that $\vx$ is a unit vector with $\|\vx\|_2=1$. We start with the proof of \eqref{concentration-eq2}. By applying Lemma \ref{HD} to $\vx$ with $\lambda=\Theta(\sqrt{\log n/ n})$, we have 
\begin{equation}\label{concentration-eq3}
	P\Bigl(\|\mH\rmD\vx\|_\infty\leq \lambda \Bigr)\geq 1-e^{-\beta}.
\end{equation}
Let $\rmA$ be as in Definition \ref{sparse-gaussian} with $s=2\lambda^2/\epsilon=\Theta(\epsilon^{-1}n^{-1}\log n)\leq 1$ and recall that $\rmPhi=\rmA\mH\rmD$.

Suppose that $\vy\in\mathbb{R}^n$ with $\|\vy\|_2=1$ and $\|\vy\|_\infty\leq \lambda$. Let $\rmY=\rmA\vy$. Then $Y_i:=(\rmA\vy)_i=\sum_{j=1}^n a_{ij}y_j$ for $1\leq i\leq m$. For $t\leq t_0:=\sqrt{2s}/\lambda=2/\sqrt{\epsilon}$, we get $t^2y_j^2/2s\leq 1$ for all $j$. Since $e^x\leq 1+2x$ for all $x\in[0,1]$ and $1+x\leq e^x$ for all $x\in\mathbb{R}$, $se^{t^2y_j^2/2s}+1-s\leq s(1+t^2y_j^2/ s)+1-s=1+t^2y_j^2\leq e^{t^2y_j^2}$.
	It follows that 
	\begin{align*}
	\E(e^{tY_i})&=\prod_{j=1}^n\E(e^{ta_{ij}y_j})
	=\prod_{j=1}^n\bigl(se^{t^2y_j^2/2s}+1-s \bigr)
	\leq \prod_{j=1}^n e^{t^2y_j^2}
	=e^{t^2}	
	\end{align*}
	holds for all $1\leq i\leq m$ and $t\in [0, t_0]$. So for $t\in [0,t_0]$, by Markov inequality and above inequality we have 
	\begin{align*}
		P\Bigl( Y_i\geq \mu\Bigr) &= P\Bigl(e^{tY_i}\geq e^{t\mu}\Bigr) 
		\leq e^{-t\mu}\E(e^{tY_i}) 
		\leq e^{-t\mu+t^2}.
		\end{align*}
	According to \eqref{concentration-eq1} we can set $t=\mu/2\leq t_0=2/\sqrt{\epsilon}$, then  
	\(
	P\Bigl( Y_i\geq \mu\Bigr)\leq e^{-\mu^2/4}.
	\)
	By symmetry we have
	\(
	P\Bigl( -Y_i\geq \mu\Bigr)\leq e^{-\mu^2/4}. 
	\)
	Consequently, for all $1\leq i \leq m$ we have
	\begin{equation}\label{concentration-eq4}
		P\Bigl( |Y_i|\geq \mu\Bigr)\leq 2e^{-\mu^2/4}.
	\end{equation}
	By a union bound,  \eqref{concentration-eq1}, and \eqref{concentration-eq4} 
	\begin{align}\label{concentration-eq5}
		P\bigl(\| \rmA\vy\|_\infty\geq \mu \bigr) &= P\Bigl( \max_{1\leq i\leq m}|Y_i|\geq \mu \Bigr) 
		\leq m P\Bigl( |Y_i|\geq \mu\Bigr) \nonumber \\
		&= 2me^{-\mu^2/4}  
		\leq e^{-\beta}. 
	\end{align}
It follows immediately from \eqref{concentration-eq3} and \eqref{concentration-eq5} with $\vy=\mH\rmD\vx$ that
\begin{align*}
	P(\|\rmPhi \vx\|_\infty \leq \mu) &= P(\|\rmA\mH\rmD \vx\|_\infty \leq \mu) \\
	&\geq P(\|\rmA\mH\rmD\vx\|_\infty\leq \mu, \|\mH\rmD\vx\|_\infty\leq\lambda) \\
	&= P(\|\rmA\mH\rmD\vx\|_\infty\leq \mu \ | \ \|\mH\rmD\vx\|_\infty\leq\lambda)P(\|\mH\rmD\vx\|_\infty\leq\lambda)\\
	&\geq (1-e^{-\beta})^2 \\
	&\geq 1-2e^{-\beta}.
	\end{align*}
Furthermore, if we replace $\vy$ by $\vx$ in \eqref{concentration-eq5} and use $\rmA$ with $s=\Theta(\epsilon^{-1}n^{-1})$, then inequality \eqref{concentration-eq6} follows. The difference in the choice of $s$ is due to the fact that for vectors in the unit ball with $\|x\|_\infty = O(n^{-1/2}\|x\|_2)$ we have that $\|x\|_\infty \leq n^{-1/2}$.
\end{proof}

\section{Proof of Theorem~\ref{main-results}}\label{appendix-main}
\begin{proof}
Since the proofs of \eqref{main-results-eq0} and \eqref{main-results-eq1} are almost identical except for using different random projections $\rmA$ and $\rmPhi$, we shall only establish the result for \eqref{main-results-eq1} in detail. For any $\vx\in \mathcal{T}\subseteq B_2^n(\kappa)$ we have $\|\vx\|_2\leq \kappa$. By applying Lemma \ref{concentration} we get
\begin{align*}
    P\bigl(\|\rmPhi \vx\|_\infty < \mu \bigr)&\geq P\bigl(\|\rmPhi \vx\|_\infty < \mu \|\vx\|_2/\kappa\bigr)\\ 
    &\geq P\bigl(\|\rmPhi \vx\|_\infty <2\sqrt{\beta+\log(2m)}\|\vx\|_2 \bigr)\\
    &\geq 1-2e^{-\beta}.
\end{align*}
Since above inequality holds for arbitrary $\vx\in\mathcal{T}$, by union bound one can get
\[
P\Bigl(\max_{\vx\in\mathcal{T}} \|\rmPhi \vx\|_\infty < \mu\Bigr)\geq 1-2|\mathcal{T}|e^{-\beta}.
\]
Suppose that $\vu_\vx$ is the state vector of input signal $\rmPhi \vx$ which is produced by stable $r$-th order $\Sigma\Delta$ scheme. Using Lemma \ref{inf-1 norm} and formula \eqref{r-stability-eq1} to get
\begin{equation}\label{main-proof-eq1}
	\|\widetilde{\mV}\mP^r\|_{\infty,1}\|\vu_\vx\|_\infty\leq Cc(\mu)^rr^r(8r)^{r+1}\sqrt{\pi/2}\lambda^{-r+1/2},
\end{equation}
 which holds uniformly for all $\vx\in\mathcal{T}$ with probability exceeding $1-2|\mathcal{T}|e^{-\beta}$.

Furthermore, by Lemma~\ref{CJLT-Corollary2} the probability that
\begin{equation}\label{main-proof-eq2}
	 \Bigl| \|\widetilde{\mV}\rmPhi(\vx-\vy)\|_1-\|\vx-\vy\|_2\Bigr|\leq\epsilon \|\vx-\vy\|_2
\end{equation}
holds simultaneously for all $\vx,\vy\in\mathcal{T}$ is at least $1-\delta$.

We deduce from triangle inequality and equations \eqref{main-proof-eq1}, \eqref{main-proof-eq2} that
\begin{align*}
&\Bigl|d_{\widetilde{\mV}}(f_2(\vx), f_2(\vy))- \|\vx-\vy\|_2\Bigr| \\
&= 	\Bigl|\|\widetilde{\mV}Q^{(r)}(\rmPhi \vx)-\widetilde{\mV}Q^{(r)}(\rmPhi \vy)\|_1-\|\vx-\vy\|_2\Bigr| \\
&\leq \Bigl| \| \widetilde{\mV}Q^{(r)}(\rmPhi \vx)-\widetilde{\mV}Q^{(r)}(\rmPhi \vy)\|_1-\|\widetilde{\mV}\rmPhi (\vx-\vy) \|_1 \Bigr| 
+ \Bigl|\|\widetilde{\mV}\rmPhi(\vx-\vy)\|_1 -\|\vx-\vy \|_2\Bigr| \\
	&\leq \|\widetilde{\mV}(Q^{(r)}(\rmPhi \vx)-\rmPhi \vx)-\widetilde{\mV}(Q^{(r)}(\rmPhi \vy)-\rmPhi \vy) \|_1 
	+ \Bigl|\|\widetilde{\mV}\rmPhi(\vx-\vy) \|_1 -\|\vx-\vy \|_2\Bigr| \\
	&\leq\|\widetilde{\mV}\mP^r\vu_\vx\|_1 +\|\widetilde{\mV}\mP^r\vu_\vy\|_1+\Bigl|\|\widetilde{\mV}\rmPhi(\vx-\vy)\|_1 -\|\vx-\vy \|_2\Bigr| \\
	&\leq \|\widetilde{\mV}\mP^r\|_{\infty,1}(\|\vu_\vx\|_\infty+\|\vu_\vy\|_\infty)+ \Bigl|\|\widetilde{\mV}\rmPhi (\vx-\vy) \|_1 -\|\vx-\vy \|_2\Bigr|\\
	&\leq 2Cc(\mu)^rr^r(8r)^{r+1}\sqrt{\pi/2}\lambda^{-r+1/2} + \epsilon\| \vx-\vy\|_2\\
	&=\sqrt{2\pi}Cc(\mu)^rr^r(8r)^{r+1}\lambda^{-r+1/2}+\epsilon\|\vx-\vy\|_2 \\
	&= C(\mu,r)\lambda^{-r+1/2}+\epsilon\|\vx-\vy\|_2
\end{align*}
holds uniformly for any $\vx,\vy\in \mathcal{T}$ with probability at least $1-\delta-2|\mathcal{T}|e^{-\beta}$.
The bound \eqref{main-results-eq0} is associated with a weaker condition on $\beta$ due to the associated weaker condition in Lemma \ref{concentration}. 
\end{proof}

\section{Comparison with product quantization}\label{product-quantization}
Note that the distance preserving quality (as well as performance on retrieval and classification tasks) of MSQ binary embeddings using bilinear projection \citep{Gong3} or circulant matrices \citep{Yu} has be shown to be at least as good as product quantization \citep{product1}, LSH \citep{LSH1, LSH2} and ITQ \citep{ITQ}. Our method uses Sigma-Delta quantization, which  
\begin{enumerate}
    \item gives provably better error rates than the MSQ design as shown in this paper, and in \citep{Saab}; 
    \item is more efficient in terms of both memory and distance query computation as shown in Section~\ref{complexity}.
\end{enumerate} 
In order to more explicitly compare our algorithm with data dependent methods, as an example, we now briefly analyze product quantization as presented in \citep{product1}. We then present a brief analysis of optimal data-independent methods as well as data-independent product quantization, in comparison with our method.

\subsection*{Data-dependent product quantization} The key idea here is to decompose the input vector space $\mathbb{R}^n$ into the Cartesian product of $M$ low-dimensional subspaces $\mathbb{R}^{d}$ with $n=Md$ and quantize each subspace into $k^*$ codewords, for example by using the $k$-means algorithm. So the total number of centroids (codewords) in $\mathbb{R}^n$ is $k=(k^*)^M$ and the time complexity of learning all $k$ centroids is $O(nNk^*t)$ where $N$ is the number of training data points and $t$ is the number of iterations in the $k$-means algorithm. Moreover, converting each input vector $x\in\mathbb{R}^n$ to the index of its codeword needs time $O(Mdk^*)=O(nk^*)$ and the length of binary codes is $m=\log_2 k=M\log_2 k^*$. Since we have to store all $k$ centroids and $M$ lookup tables, memory usage is $O(M(dk^*+(k^*)^2))=O(nk^*+M(k^*)^2)$. Moreover, the query time, i.e. the time complexity of pairwise distance estimation is $O(Mk^*)$ using lookup tables. As a result, we obtain Table~\ref{table:comparison-product}, whose column headings are analogous to those in Table~\ref{table:comparison}. 

\begin{table}[H] 
\caption{Comparison between the proposed method and product quantization per data point}
\label{table:comparison-product}  
  \begin{threeparttable}\begin{tabular}{|c||c|c|c|c|c|c|}
    \hline
Method & Time & Space & Storage & Query Time\\
     \hline
Product Quantization & $O(nk^*)$  & $O(nk^*+M(k^*)^2)$ & $O(M\log_2 k^*)$ & $O(Mk^*)$ \\
\hline
Our Method  &$O(m)$ & $O(m)$ & $O(p\log_2\lambda)$ & $O(p\log_2\lambda)$\\
(on well-spread $\mathcal{T}$) & & &  & \\

\hline 
\end{tabular}
  \end{threeparttable}
\end{table} 

A direct comparison of the associated errors is not possible due to the fact that the error associated with data-dependent product quantization is a function of the input data distribution, and the convergence of the $k$-means algorithm. Nevertheless, one can note some tradeoffs from Table \ref{table:comparison-product}. Namely,  the embedding time and the space needed to store our embedding matrix are lower than those associated with product quantization. On the other hand, the space needed to store the embedded data points and the query time associated with product quantization depend on the parameter choices $M$ and $k^*$, which also affect the resulting accuracy.  Finally, we note that product quantization (using $k$-means clustering) is associated with a pre-processing time $O(nNk^*t)$, which is significantly larger than our method.

\subsection*{Data-independent product quantization and optimality of our method}
If one were to just encode, in a data independent way, the $\ell_2$ ball of $\mathbb{R}^n$, so that the encoding error is at most $\theta$, then a simple volume argument shows that one needs at least $\theta^{-n}$ codewords, hence $n\log_2(1/\theta)$ bits. This lower bound holds, independent of the encoding method, i.e., whether one uses product quantization or any other technique. To reduce the number of bits below $n$, one approach is to capitalize on the finiteness of the data, and use a JL type embedding (such as random sampling for well-spread data) to reduce the dimension to $p \approx \log|T|/\epsilon^2$ (up to log factors), and therefore introduce a new embedding error of $\epsilon$, on top of the encoding error. The advantage is that one would then only need to encode an $\ell_2$ ball in the $p$-dimensional space. Again, independently of the encoding method, one would now need $p \log(1/\theta)$ bits to get an encoding error of $\theta$. If we denote $c_x,c_y$, the encoding of  $x$ and $y$, then this gives the error estimate  
\[
\big| \|c_x-c_y\|-\|x-y\| \big| \lesssim \theta + \epsilon \|x-y\|. 
\]
If we rewrite the error now in terms of the number of bits $b=p\log(1/\theta)$, we get
\[
\big| \|c_x-c_y\|-\|x-y\| \big| \lesssim 2^{-b/p} + \epsilon \|x-y\|.
\]
Note that in all of this, no computational complexity was taken into account. 

One can envision replacing $k$-means clustering in product quantization, with a data-independent encoding. With a careful choice of parameters, this may be significantly more computationally efficient than the above optimal encoding, albeit at the expense of a sub-optimal error bound. 

On the other hand, consider that our computationally efficient scheme uses $m$ bits, and that those $m$ bits can be compressed into $b \approx r p\log(m/p)$ bits (see Section~\ref{complexity}), then our error, by Theorem \ref{main-results} is 
\[
\big| \|c_x-c_y\|-\|x-y\| \big| \lesssim c(m/p)^{-r+1/2} + \epsilon \|x-y\|, 
\]
which in rate-distortion terms is
\[
\big| \|c_x-c_y\|-\|x-y\| \big| \lesssim 2^{-\frac{b}{p}\frac{r-1/2}{r}} + \epsilon \|x-y\|. 
\]
In other words, up to constants in the exponent, and possible logarithmic terms, our result is near-optimal.

\end{document}